\title{Algorithms for Automatic Ranking of Participants and Tasks in an Anonymized Contest \thanks{This paper is the full version of a paper with the same title to appear in 11th International Conference and Workshops on Algorithms and Computation~\cite{JRG17}.}}
\author{Yang Jiao \and R. Ravi \and Wolfgang Gatterbauer}
\institute{
Tepper School of Business\\
Carnegie Mellon University\\ 
5000 Forbes Ave., Pittsburgh, PA 15213\\
}
\newcolumntype{L}[1]{>{\raggedright\let\newline\\\arraybackslash\hspace{0pt}}m{#1}}
\newcolumntype{C}[1]{>{\centering\let\newline\\\arraybackslash\hspace{0pt}}m{#1}}
\newcolumntype{R}[1]{>{\raggedleft\let\newline\\\arraybackslash\hspace{0pt}}m{#1}}
\begin{document}
\maketitle

\thispagestyle{plain}\pagestyle{plain}

\begin{abstract}
We introduce a new set of problems based on the \emph{Chain Editing problem}.
In our version of Chain Editing, we are given a set of anonymous participants and a set of undisclosed tasks that every participant attempts. For each participant-task pair, we know whether the participant has succeeded at the task or not. We assume that participants vary in their ability to solve tasks, and that tasks vary in their difficulty to be solved. In an ideal world, stronger participants should succeed at a superset of tasks that weaker participants succeed at. Similarly, easier tasks should be completed successfully by a superset of participants who succeed at harder tasks. In reality, it can happen that a stronger participant fails at a task that a weaker participants succeeds at. Our goal is to find a \emph{perfect nesting of the participant-task relations} by flipping a minimum number of participant-task relations, implying such a ``nearest perfect ordering'' to be the one that is closest to the truth of participant strengths and task difficulties. Many variants of the problem are known to be NP-hard.

We propose six natural $k$-near versions of the Chain Editing problem and classify their complexity. The input to a $k$-near Chain Editing problem includes an initial ordering of the participants (or tasks) that we are required to respect by moving each participant (or task) at most $k$ positions from the initial ordering. We obtain surprising results on the complexity of the six $k$-near problems: Five of the problems are polynomial-time solvable using dynamic programming, but one of them is NP-hard.
\end{abstract}

\keywords{Chain Editing, Chain Addition, Truth Discovery, Massively Open Online Classes, Student Evaluation}
\section{Introduction}
\subsection{Motivation}
Consider a contest with a set $S$ of participants who are required to complete a set $Q$ of tasks. Every participant either succeeds or fails at completing each task. The identities of the participants and the tasks are anonymous. We aim to obtain rankings of the participants' strengths and the tasks' difficulties.  
This situation can be modeled by an unlabeled bipartite graph with participants on one side, tasks on the other side, and edges defined by whether the participant succeeded at the task. From the edges of the bipartite graph, we can infer that a participant $a_2$ is stronger than $a_1$ if the neighborhood of $a_1$ is contained in (or is ``nested in'') that of $a_2$. Similarly, we can infer that a task is easier than another if its neighborhood contains that of the other. See Figure~\ref{fig:allContexts} for a visualization of strengths of participants and difficulties of tasks. If all neighborhoods are nested, then this nesting immediately implies a ranking of the participants and tasks.
However, participants and tasks are not perfect in reality, which may result in a bipartite graph with ``non-nested'' neighborhoods. In more realistic scenarios, we wish to determine a ranking of the participants and the tasks when the starting graph is not ideal, which we define formally in Section~\ref{sect:formulation}.

\begin{figure}
\centering
  \includegraphics[scale=.4]{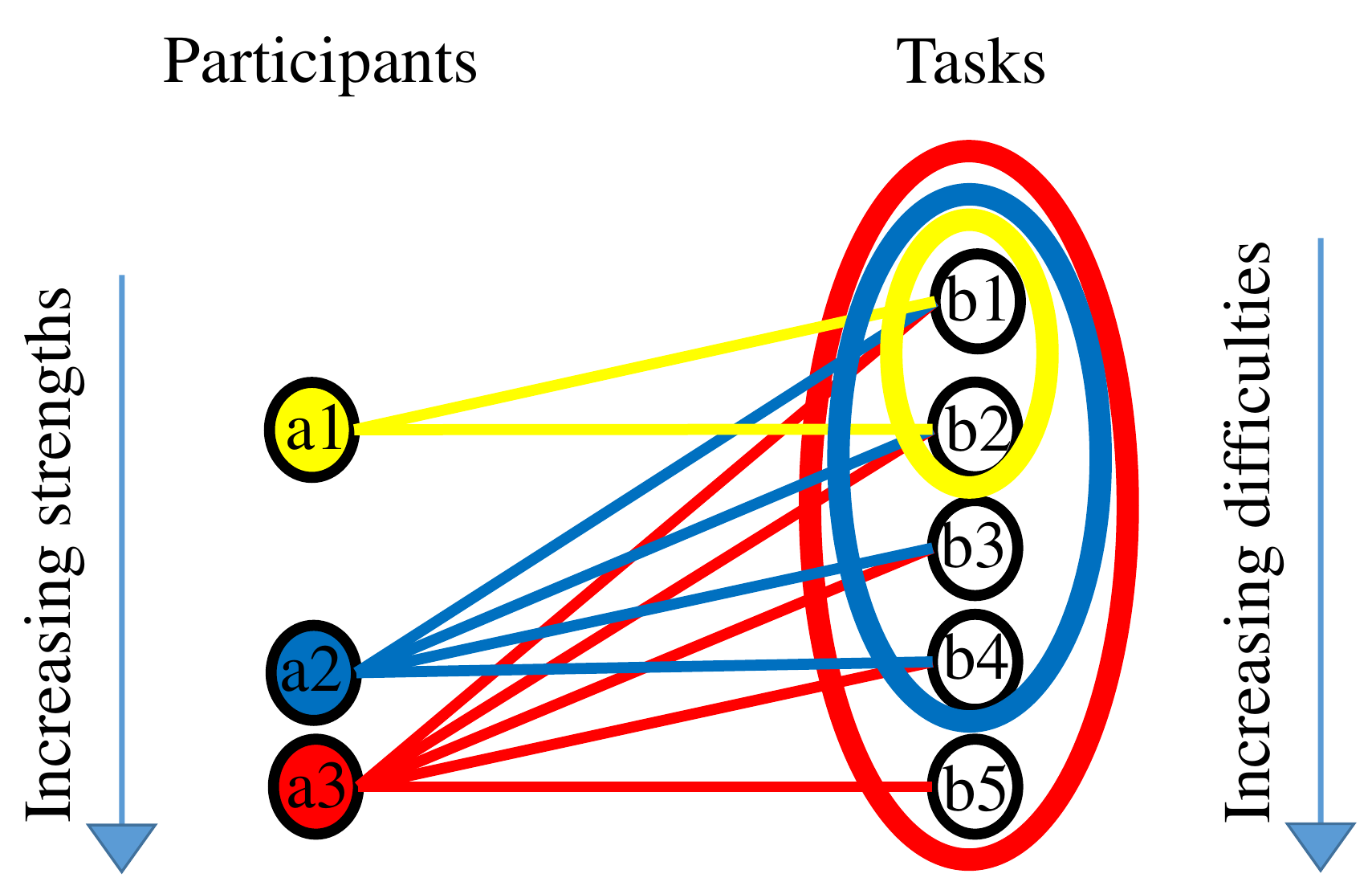}
  \caption{An ideal graph is shown. Participants and tasks may be interpreted as students and questions, or actors and claims. Participant $a_1$ succeeds at $b_1$ to $b_2$; $a_2$ succeeds at $b_1$ to $b_4$; $a_3$ succeeds at $b_1$ to $b_5$. The nesting of neighborhoods here indicate that participant $a_1$ is weaker than $a_2$, who is weaker than $a_3$, and task $b_1$ and $b_2$ are easier than $b_3$ and $b_4$, which in turn are easier than $b_5$.}
  \label{fig:allContexts}
\end{figure}

\subsubsection{Relation to Truth Discovery.}

A popular application of unbiased rankings is computational ``truth discovery.''
\emph{Truth discovery} is the determination of trustworthiness of conflicting pieces of information that are observed often from a variety of sources~\cite{Roth2013:AAAI:tutorial}
and is motivated by the problem of extracting information from networks where the trustworthiness of the actors are uncertain 
\cite{gupta2011heterogeneous}.
The most basic model of the problem is to consider a bipartite graph where one side is made up of actors, the other side is made up of their claims, and edges denote associations between actors and claims.
Furthermore, claims and actors are assumed to have ``trustworthiness'' and ``believability'' scores, respectively, with known a priori values. 
According to a number of recent surveys~\cite{gupta2011heterogeneous,Roth2013:AAAI:tutorial,LGMLSZFH15}, common approaches for truth discovery include
iterative procedures, 
optimization methods, and 
probabilistic graphic models. 
Iterative methods~\cite{DBS09,galland2010corroborating,PasternackRo10,yin2008truth} update 
trust scores of actors to believability scores of claims, and vice versa, until convergence. Various variants of these methods (such as Sums, Hubs and Authorities~\cite{kleinberg1999authoritative}, AverageLog, TruthFinder, Investment, and PooledInvestment) have been extensively studied and proven in practice~\cite{andersen2008trust}.
Optimization methods~\cite{AYLLGD14,LLGZFH14} aim to find truths that minimize the total distance between the provided claims and the output truths for some specified continuous distance function; coordinate descent~\cite{B99} is often used to obtain the solution. 
Probabilistic graphical models~\cite{PasternackRo13} of truth discovery are solved by expectation maximization. 
Other methods for truth discovery include those that leverage trust relationships between the sources~\cite{GatterbauerS2010:ConflictResolution}.
Our study is conceptually closest to optimization approaches (we minimize the number of edge additions or edits), however we suggest a \emph{discrete objective} for minimization, for which
we need to develop new algorithms.

\subsubsection{Our Context: Massively Open Online Courses.}
Our interest in the problem arises from trying to model the problem of automatic grading of large number of students in the context of MOOCs (massively open online courses). Our idea is to crowd-source the creation of automatically gradable questions (like multiple choice items) to students, and have all the students take all questions. 
From the performance of the students, we would like to quickly compute a roughly accurate ordering of the difficulty of the crowd-sourced questions. Additionally, we may also want to efficiently rank the strength of the students based on their performance. 
Henceforth, we refer to participants as students and tasks as questions in the rest of the paper.

\subsubsection{Our Model.}
We cast the ranking problem as a discrete optimization problem of minimizing the number of changes to a given record of the students' performance to obtain nested neighborhoods. This is called the Chain Editing problem. It is often possible that some information regarding the best ranking is already known. For instance, if the observed rankings of students on several previous assignments are consistent, then it is likely that the ranking on the next assignment will be similar. We model known information by imposing an additional constraint that the changes made to correct the errors to an ideal ranking must result in a ranking that is near a given base ranking. By near, we mean that the output position of each student should be within at most $k$ positions from the position in the base ranking, where $k$ is a parameter. Given a nearby ranking for students, we consider all possible variants arising from how the question ranking is constrained. The question ranking may be constrained in one of the following three ways: the exact question ranking is specified (which we term the ``constrained'' case), it must be near a given question ranking (the ``both near'' case), or the question ranking is unconstrained (the ``unconstrained'' case). We provide the formal definitions of these problems next.

\subsection{Problem Formulations}
\label{sect:formulation}
Here, we define all variants of the ranking problem. The basic variants of Chain Editing are defined first and the $k$-near variants are defined afterward.

\subsubsection{Basic Variants of Chain Editing}
\label{sect:basicMOs}
First, we introduce the problem of recognizing an ideal input. Assume that we are given a set $S$ of students, and a set $Q$ of questions, and edges between $S$ and $Q$ that indicate which questions the students answered correctly - note that we assume that every student attempts every question. Denote the resulting bipartite graph by $G = (S \cup Q, E)$. For every pair $(s,q) \in S \times Q$, we are given an edge between $s$ and $q$ if and only if student $s$ answered question $q$ correctly. For a graph $(V,E)$, denote the neighborhood of a vertex $x$ by $N(x):=\{y \in V: xy \in E\}$.

\begin{definition}
We say that student $s_1$ is \emph{stronger} than $s_2$ if $N(s_1) \supset N(s_2)$. We say that question $q_1$ is \emph{harder} than $q_2$ if $N(q_1) \subset N(q_2)$. Given an ordering $\alpha$ on the students and $\beta$ on the questions, $\alpha(s_1) \geq \alpha(s_2)$ shall indicate that $s_1$ is stronger than $s_2$, and $\beta(q_1) \geq \beta(q_2)$ shall indicate that $q_1$ is harder than $q_2$. 
\end{definition}

\begin{definition}
An ordering of the questions satisfies the \emph{interval property} if for every $s$, its neighborhood $N(s)$ consists of a block of consecutive questions (starting with the easiest question) with respect to our ordering of the questions. An ordering $\alpha$ of the students is \emph{nested} if $\alpha(s_1) \geq \alpha(s_2)  \Rightarrow N(s_1) \supseteq N(s_2)$.
\end{definition}

\begin{definition}
The objective of the \emph{Ideal Mutual Orderings (IMO)} problem is to order the students and the questions so that they satisfy the nested and interval properties respectively, or output NO if no such orderings exist.
\end{definition}

Observe that IMO can be solved efficiently by comparing containment relation among the neighborhoods of the students and ordering the questions and students according to the containment order. 

\begin{restatable}{proposition}{IMO}
\label{prop:IMO}
There is a polynomial time algorithm to solve IMO.
\end{restatable}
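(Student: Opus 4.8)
The plan is to turn the containment (nesting) structure on neighborhoods into an explicit total order and then verify it satisfies both required properties; if at any point the structure fails to be a total order, we output NO. First I would compute, for every pair of students $s_1, s_2 \in S$, whether $N(s_1) \subseteq N(s_2)$, $N(s_2) \subseteq N(s_1)$, both (equality), or neither; each such test takes $O(|Q|)$ time, so the whole table costs $O(|S|^2 |Q|)$ time. If some pair falls into the ``neither'' case, the students cannot be linearly nested, so no valid ordering $\alpha$ exists and we output NO. Otherwise the relation $\subseteq$ on $\{N(s) : s \in S\}$ is a total preorder, and I would sort the students so that $N(s_1) \subseteq N(s_2) \subseteq \cdots$; ties (equal neighborhoods) may be broken arbitrarily. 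This yields a candidate ordering $\alpha$ that is nested by construction.

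Next I would order the questions. The natural choice is to sort questions $q$ by $|N(q)|$ in nonincreasing order (easiest first), or equivalently to run the symmetric containment test on the sets $N(q) \subseteq Q$-side neighborhoods. I would then check the interval property directly: for each student $s$, with the questions laid out as $q_1, q_2, \ldots$ in this order, verify that $N(s)$ is exactly a prefix $\{q_1, \ldots, q_{d}\}$ where $d = |N(s)|$. If every student's neighborhood is such a prefix, output the pair $(\alpha, \beta)$; otherwise output NO. The total running time is clearly polynomial, dominated by the $O(|S|^2|Q| + |S||Q|^2)$ comparison steps.

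The one step that needs a genuine argument — and the main (mild) obstacle — is \emph{completeness}: showing that if a valid pair of orderings exists at all, then the greedy sort above finds one. For this I would argue that in any nested ordering the neighborhoods $N(s)$ must form a chain under $\subseteq$, so the "neither" case can never arise for a YES instance, and that the interval property forces, for each student, $N(s)$ to be a block of consecutive questions starting from the easiest; combined with nestedness (the blocks are totally ordered by inclusion), this means $|N(s)|$ alone determines the block, so sorting questions by popularity $|N(q)|$ is forced up to reordering within ties, and any such sort works. I would also note the symmetric consistency point: once the questions are sorted by $|N(q)|$ and the students by $|N(s)|$, the prefix check on the student side automatically certifies the interval property, and a short lemma shows this is equivalent to the question side also being "nested" in the analogous sense, so no separate check is needed. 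The remaining details (handling empty neighborhoods, ties, and the equivalence of the two sorting criteria) are routine.
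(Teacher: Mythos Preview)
Your proposal is correct and follows essentially the same approach as the paper: check pairwise containment of student neighborhoods, output NO if any pair is incomparable, otherwise sort students by degree (equivalently, by containment) and order the questions by how early they appear in the resulting chain of neighborhoods. The only cosmetic differences are that the paper orders questions by the index of the first student-neighborhood containing them rather than by $|N(q)|$ (these coincide once the student neighborhoods form a chain), and the paper argues the interval property holds by construction rather than re-verifying it as you do; as you yourself note, that verification is redundant once the student side is known to be a chain.
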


All missing proofs are in the Appendix~\ref{sect:fullProofs}. Next, observe that the nested property on one side is satisfiable if and only if the interval property on the other side is satisfiable. Hence, we will require only the nested property in subsequent variants of the problem.

\begin{restatable}{proposition}{intervalEquivNested}
\label{prop:intervalEquivNested}
A bipartite graph has an ordering of all vertices so that the questions satisfy the interval property if and only if it has an ordering with the students satisfying the nested property.
\end{restatable}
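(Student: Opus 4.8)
The plan is to route both conditions through a single symmetric structural fact: a bipartite graph admits an ordering of the questions with the interval property (equivalently, an ordering of the students with the nested property) precisely when the student neighborhoods $\{N(s): s\in S\}$ form a chain under set inclusion. I would prove the two implications separately, in each case building the required ordering by sorting one side by neighborhood size.

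For the direction ``interval $\Rightarrow$ nested,'' suppose $\beta$ is a question ordering under which every $N(s)$ is a block of consecutive questions starting with the easiest one, i.e.\ $N(s)$ is a prefix of $\beta$ of length $|N(s)|$. Any two prefixes of a fixed linear order are comparable under inclusion, so $\{N(s):s\in S\}$ is a chain. Now list the students in nondecreasing order of $|N(s)|$ (ties broken arbitrarily) to define $\alpha$. If $\alpha(s_1)\ge\alpha(s_2)$ then $|N(s_1)|\ge|N(s_2)|$, and since $N(s_1)$ and $N(s_2)$ are comparable, the larger-or-equal one contains the other, giving $N(s_1)\supseteq N(s_2)$; thus $\alpha$ is nested.

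For ``nested $\Rightarrow$ interval,'' let $\alpha$ be nested and write $s_{(1)},\dots,s_{(n)}$ for the students in increasing $\alpha$-order, so $N(s_{(1)})\subseteq\cdots\subseteq N(s_{(n)})$. Fix a question $q$: if $s_{(i)}\in N(q)$, i.e.\ $q\in N(s_{(i)})$, then $q\in N(s_{(j)})$ for all $j\ge i$, so $N(q)$ is a suffix $\{s_{(i)},\dots,s_{(n)}\}$ of the student order, and $s_{(i)}\in N(q)$ iff $|N(q)|\ge n-i+1$. Order the questions in nonincreasing order of $|N(q)|$ to define $\beta$ (so the easiest, largest-neighborhood question comes first). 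Then for each $s_{(i)}$ we have $N(s_{(i)})=\{q:|N(q)|\ge n-i+1\}$, which is exactly a prefix of $\beta$ — the desired block of consecutive questions starting with the easiest — so $\beta$ satisfies the interval property.

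The only delicate bookkeeping is tie handling: two questions with the same value of $|N(q)|$ are suffixes of the student order of equal length, hence literally the same set, so they belong to exactly the same student neighborhoods and their relative order in $\beta$ is immaterial; the analogous remark covers equal-size student neighborhoods in the first direction. The one genuine idea, as opposed to routine verification, is the observation in the second direction that a nested student ordering forces \emph{every} question neighborhood to be a suffix of that same ordering — this is what allows one global sort of the questions by degree to simultaneously convert all student neighborhoods into prefixes, and I'd expect this to be the step most worth stating carefully.
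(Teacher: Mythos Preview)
Your proof is correct and follows essentially the same approach as the paper: in both directions you observe that the student neighborhoods form a chain and then sort the other side by degree. The only cosmetic difference is that for the backward direction the paper appeals to its already-proved Proposition~\ref{prop:IMO} (the IMO algorithm), whereas you give the argument directly via the suffix observation; the underlying construction is the same.
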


Next, we define several variants of IMO.

\begin{definition}
In the \emph{Chain Editing (CE)} problem, we are given a bipartite graph representing student-question relations and asked to find a minimum set of edge edits that admits an ordering of the students satisfying the nested property.
\end{definition}

A more restrictive problem than Chain Editing is Chain Addition.
Chain Addition is variant of Chain Editing that allows only edge additions and no deletions. Chain Addition models situations where students sometimes accidentally give wrong answers on questions they know how to solve but never answer a hard problem correctly by luck, e.g. in numerical entry questions.

\begin{definition}
In the \emph{Chain Addition (CA)} problem, we are given a bipartite graph representing student-question relations and asked to find a minimum set of edge additions that admits an ordering of the students satisfying the nested property.
\end{definition}

Analogous to needing only to satisfy one of the two properties, it suffices to find an optimal ordering for only one side. Once one side is fixed, it is easy to find an optimal ordering of the other side respecting the fixed ordering.

\begin{restatable}{proposition}{oneSide}
\label{prop:oneSide}
In Chain Editing, if the best ordering (that minimizes the number of edge edits) for either students or questions is known, then the edge edits and ordering of the other side can be found in polynomial time.
\end{restatable}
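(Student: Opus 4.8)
The plan is to show that fixing the ordering of one side forces the neighborhoods on the other side into a very restricted shape, which then decouples the editing decisions across that side. Assume we are given the student ordering $s_1, \dots, s_n$ with $\alpha(s_1) \le \cdots \le \alpha(s_n)$ (the case of a fixed question ordering is symmetric, via Proposition~\ref{prop:intervalEquivNested}). First I would establish the following normal form: an edited graph $G'$ has this student ordering nested, i.e.\ $N_{G'}(s_1) \subseteq \cdots \subseteq N_{G'}(s_n)$, if and only if for every question $q$ the set of students incident to $q$ is \emph{upward closed} in the ordering, hence of the form $\{s_t, s_{t+1}, \dots, s_n\}$ for a single threshold $t(q) \in \{1, \dots, n+1\}$ (with $t(q) = n+1$ meaning $q$ is isolated). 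The forward direction is immediate: if $q \in N_{G'}(s_i)$ and $j \ge i$, then $N_{G'}(s_i) \subseteq N_{G'}(s_j)$ gives $q \in N_{G'}(s_j)$. The reverse direction is equally immediate: given any tuple of thresholds $(t(q))_{q \in Q}$, the graph with $s_i \sim q \iff i \ge t(q)$ has $N_{G'}(s_i) = \{q : t(q) \le i\}$, which is monotone in $i$. So feasible edited graphs respecting the given student ordering are in bijection with tuples of per-question thresholds.

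Next I would exploit this bijection to minimize edits. Since forcing the column of $q$ to match threshold $t$ touches only entries in that column, the total number of edits decomposes as $\sum_{q \in Q} c_q(t(q))$, where $c_q(t)$ counts the students $s_i$ with $i < t$ that are currently adjacent to $q$, plus the students $s_i$ with $i \ge t$ that are currently non-adjacent to $q$. For each question $q$ independently I would compute a threshold $t^*(q)$ minimizing $c_q(\cdot)$; this takes $O(n)$ time per question by scanning the $n+1$ candidate thresholds and updating $c_q$ incrementally. The resulting edit set realizes $\min \sum_q c_q(t(q))$ over all choices, which by the bijection is exactly the minimum number of edits among all edited graphs consistent with the fixed student ordering; in particular, if that ordering is the one appearing in some global optimum of Chain Editing, this recovers an optimal solution. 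The total running time is $O(|S|\,|Q|)$.

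Finally, to output an ordering of the questions I would sort them by nondecreasing $t^*(q)$ (ties broken arbitrarily): a smaller threshold corresponds to a larger student-neighborhood, hence an easier question, so each $N_{G'}(s_i) = \{q : t^*(q) \le i\}$ is an initial block of this order — i.e.\ the order satisfies the interval property with respect to the student ordering, as required by Proposition~\ref{prop:oneSide}.

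I expect the crux to be the normal-form equivalence of the first paragraph, since it is what converts an apparently global optimization into $|Q|$ independent one-dimensional subproblems; everything after it is bookkeeping. One point to handle with care is that \emph{nested} is defined with non-strict containment, so distinct questions may share a threshold without any conflict, which is precisely what makes the per-question minimizations independent and legitimate.
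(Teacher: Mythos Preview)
Your proof is correct and follows essentially the same idea as the paper: both arguments observe that once one side's ordering is fixed, feasibility forces each vertex on the other side to have a threshold-shaped neighborhood, so the edit cost decouples into independent one-dimensional minimizations solvable in $O(|S|\,|Q|)$ total. The only cosmetic difference is that the paper fixes the question ordering and optimizes a threshold per student, whereas you fix the student ordering and optimize a threshold per question; you explicitly note the symmetry, and your normal-form justification is in fact spelled out more carefully than the paper's.
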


\subsubsection{$k$-near Variants of Chain Editing}
We introduce and study the nearby versions of Chain Editing or Chain Addition. Our problem formulations are inspired by Balas and Simonetti's~\cite{BS00} work on $k$-near versions of the TSP. 
\begin{definition}
In the \emph{$k$-near} problem, we are given an initial ordering $\alpha: S \rightarrow [|S|]$ and a positive integer $k$. A \emph{feasible} solution exhibits a set of edge edits (additions) attaining the nested property so that the associated ordering $\pi$, induced by the neighborhood nestings, of the students satisfies $\pi(s) \in [\alpha(s)-k, \alpha(s)+k]$.
\end{definition}
Next, we define three types of $k$-near problems. In the subsequent problem formulations, we bring back the interval property to our constraints since we consider problems where the question side is not allowed to be arbitrarily ordered.

\begin{definition}
In \emph{Unconstrained $k$-near} Chain Editing (Addition), the student ordering must be $k$-near but the question side may be ordered any way. The objective is to minimize the number of edge edits (additions) so that there is a $k$-near ordering of the students that satisfies the nested property.
\end{definition}

\begin{definition}
In \emph{Constrained $k$-near} Chain Editing (Addition), the student ordering must be $k$-near while the questions have a fixed initial ordering that must be kept. The objective is to minimize the number of edge edits (additions) so that there is $k$-near ordering of the students that satisfies the nested property and respects the interval property according to the given question ordering.
\end{definition}

\begin{definition}
In \emph{Both $k$-near} Chain Editing (Addition), both sides must be $k$-near with respect to two given initial orderings on their respective sides. The objective is to minimize the number of edge edits (additions) so that there is a $k$-near ordering of the students that satisfies the nested property and a $k$-near ordering of the questions that satisfies the interval property.
\end{definition}

\subsection{Main Results}
In this paper, we introduce $k$-near models to the Chain Editing problem and present surprising complexity results. Our $k$-near model captures realistic scenarios of MOOCs, where information from past tests is usually known and can be used to arrive at a reliable initial nearby ordering.

We find that five of the $k$-near Editing and Addition problems have polynomial time algorithms while the Unconstrained $k$-near Editing problem is NP-hard. Our intuition is that the Constrained $k$-near and Both $k$-near problems are considerably restrictive on the ordering of the questions, which make it easy to derive the best $k$-near student ordering. The Unconstrained $k$-near Addition problem is easier than the corresponding Editing problem because the correct neighborhood of the students can be inferred from the neighborhoods of all weaker students in the Addition problem, but not for the Editing version.

Aside from restricting the students to be $k$-near, we may consider all possible combinations of whether the students and questions are each $k$-near, fixed, or unconstrained. The remaining (non-symmetric) combinations not covered by the above $k$-near problems are both fixed, one side fixed and the other side unconstrained, and both unconstrained. The both fixed problem is easy as both orderings are given in the input and one only needs to check whether the orderings are consistent with the nesting of the neighborhoods. When one side is fixed and the other is unconstrained, we have already shown that the ordering of the unconstrained side is easily derivable from the ordering of the fixed side via Proposition~\ref{prop:oneSide}. If both sides are unconstrained, this is exactly the Chain Editing (or Addition) problem, which are both known to be NP-hard (see below). Table~\ref{fig:nineCombinations} summarizes the complexity of each problem, including our results for the $k$-near variants, which are starred. Note that the role of the students and questions are symmetric up to flipping the orderings.

\begin{figure}
\centering
\begin{tabular}{ | L{1.2cm} | L{1.3cm} | C{2.4cm} | C{2cm} | C{2cm} | C{2.2cm} | r r}
\hline
\multicolumn{2}{|l|}{\multirow{2}{*}{\diagbox{Questions}{Students}}}
	& \multirow{2}{*}{Unconstrained} 
	& \multicolumn{2}{c |}{$k$-near}	
	& \multirow{2}{*}{Constrained}\\
\multicolumn{2}{|l|}{}	&
	& Editing		
	& Addition	& \\
\hhline{*{6}{-}}
\multicolumn{2}{|l|}{Unconstrained}		
	& \cellcolor{red!30}NP-hard~\cite{DDLS15,Y81} 	
	& \cellcolor{red!30}NP-hard	
	& \cellcolor{green!20}$\mathcal{O}(n^3 k^{2k+2})$	
	& \cellcolor{green!20}$\mathcal{O}(n^2)$\\
\hhline{*{6}{-}}
\multicolumn{1}{|l}{\multirow{2}{*}{$k$-near} }
	& Editing	& \cellcolor{red!30}NP-hard 					
	& \cellcolor{green!20}$\mathcal{O}(n^3 k^{4k+4})$	
	&\cellcolor{black!40}
	&\cellcolor{green!20}$\mathcal{O}(n^3 k^{2k+2})$\\	
\hhline{~|*{5}{-}}
\multicolumn{1}{|l}{}
	& Addition	
	& \cellcolor{green!20}$\mathcal{O}(n^3 k^{2k+2})$ 
	&\cellcolor{black!40}
	&\cellcolor{green!20}$\mathcal{O}(n^3 k^{4k+4})$		
	&\cellcolor{green!20}$\mathcal{O}(n^3 k^{2k+2})$		
	\\
\hhline{*{6}{-}}
\multicolumn{2}{|l|}{Constrained}			
	&\cellcolor{green!20} $\mathcal{O}(n^2)$		
	&\cellcolor{green!20} $\mathcal{O}(n^3 k^{2k+2})$ 
	&\cellcolor{green!20} $\mathcal{O}(n^3 k^{2k+2})$		
	&\cellcolor{green!20}$\mathcal{O}(n^2)$ \\
\hhline{*{6}{-}}
\end{tabular}
\caption{All variants of the problems are shown with their respective complexities. 
The complexity of Unconstrained/Unconstrained Editing~\cite{DDLS15} and Addition~\cite{Y81} were derived before. All other results are given in this paper. 
Most of the problems have the same complexity for both Addition and Editing versions. 
The only exception is the Unconstrained $k$-near version where Editing is NP-hard while Addition has a polynomial time algorithm.
}
\label{fig:nineCombinations}
\end{figure}

To avoid any potential confusion, we emphasize that our algorithms are not fixed-parameter tractable algorithms, as our parameter $k$ is not a property of problem instances, but rather is part of the constraints that are specified for the outputs to satisfy.

The remaining sections are organized as follows. 
Section~\ref{sect:relatedWork} discusses existing work on variants of Chain Editing that have been studied before. Section~\ref{sect:exacts} shows the exact algorithms for five of the $k$-near problems and includes the NP-hardness proof for the last $k$-near problem. Section~\ref{sect:conclusion} summarizes our main contributions.

\section{Related Work}
\label{sect:relatedWork}
The earliest known results on hardness and algorithms tackled Chain Addition. Before stating the results, we define a couple of problems closely related to Chain Addition. The \emph{Minimum Linear Arrangement} problem considers as input a graph $G=(V,E)$ and asks for an ordering $\pi: V \rightarrow [|V|]$ minimizing $\sum_{vw \in E} |\pi(v)-\pi(w)|$. The \emph{Chordal Completion} problem, also known as the \emph{Minimum Fill-In} problem, considers as input a graph $G=(V,E)$ and asks for the minimum size set of edges $F$ to add to $G$ so that $(V,E \cup F)$ has no chordless cycles. A \emph{chordless} cycle is a cycle $(v_1,\ldots, v_r, v_1)$ such that for every $i, j$ with $|i - j| > 1$ and $\{i,j\} \neq \{1,r\}$, we have $v_i v_j \notin E$. Yannakakis~\cite{Y81} proved that Chain Addition is NP-hard by a reduction from Linear Arrangement. He also showed that Chain Addition is a special case of Chordal Completion on graphs of the form $(G=U \cup V,E)$ where $U$ and $V$ are cliques. Recently, Chain Editing was shown to be NP-hard by Drange et. al.~\cite{DDLS15}.

Another problem called \emph{Total Chain Addition} is essentially identical to Chain Addition, except that the objective function counts the number of total edges in the output graph rather than the number of edges added. For Total Chain Addition, Feder et. al.~\cite{FMT09} give a $2$-approximation. The total edge addition version of Chordal Completion has an $O(\sqrt{\Delta}\log^4(n))$-approximation algorithm~\cite{AKR93} where $\Delta$ is the maximum degree of the input graph. For Chain Addition, Feder et. al.~\cite{FMT09} claim an $8d+2$-approximation, where $d$ is the smallest number such that every vertex-induced subgraph of the original graph has some vertex of degree at most $d$. 
Natanzon et. al.~\cite{NSS00} give an $8OPT$-approximation for Chain Addition by approximating Chordal Completion.
However, no approximation algorithms are known for Chain Editing.

Modification to chordless graphs and to chain graphs have also been studied from a fixed-parameter point of view. A \emph{fixed-parameter tractable (FPT)} algorithm for a problem of input size $n$ and parameter $p$ bounding the value of the optimal solution, is an algorithm that outputs an optimal solution in time $O(f(p)n^c)$ for some constant $c$ and some function $f$ dependent on $p$. For Chordal Completion, Kaplan et. al.~\cite{KST99} give an FPT in time $O(2^{O(OPT)} + OPT^2 nm)$. Fomin and Villanger~\cite{FV12} show the first subexponential FPT for Chordal Completion, in time $O(2^{O(\sqrt{OPT} \log OPT)} + OPT^2 nm)$. Cao and Marx~\cite{CM16} study a generalization of Chordal Completion, where three operations are allowed: vertex deletion, edge addition, and edge deletion. There, they give an FPT in time $2^{O(OPT \log OPT)} n^{O(1)}$, where $OPT$ is now the minimum total number of the three operations needed to obtain a chordless graph. For the special case of Chain Editing, Drange et. al.~\cite{DDLS15} show an FPT in time $2^{O(\sqrt{OPT} \log OPT)} + \text{poly}(n)$. They also show the same result holds for a related problem called Threshold Editing. 

On the other side, Drange et. al.~\cite{DDLS15} show that Chain Editing and Threshold Editing do not admit $2^{o(\sqrt{OPT})} \text{poly}(n)$ time algorithms assuming the Exponential Time Hypothesis (ETH). For Chain Completion and Chordal Completion, Bliznets et. al.~\cite{BCKMP16} exclude the possibility of $2^{O(\sqrt{n}/\log n)}$ and $2^{O(OPT^{\frac 1 4} / \log^c k)} n^{O(1)}$ time algorithms assuming ETH, where $c$ is a constant. For Chordal Completion, Cao and Sandeep~\cite{CS16} showed that no algorithms in time $2^{O(\sqrt{OPT} - \delta)} n^{O(1)}$ exist for any positive $\delta$, assuming ETH. They also exclude the possibility of a PTAS for Chordal Completion assuming $P \neq NP$. Wu et. al.~\cite{WAPL14} show that no constant approximation is possible for Chordal Completion assuming the Small Set Expansion Conjecture.
Table~\ref{tab:known} summarizes the known results for the aforementioned graph modification problems.

\begin{center}
\captionof{table}{Known Results} \label{tab:known}
  \begin{tabular}{ | C{2.4cm} | C{4.4cm} | C{4.4cm} |}
    \hline
		\qquad & Chordal & Chain\\
		\hline
    Editing & Unknown approximation, FPT~\cite{DBS09} & Unknown approximation, FPT~\cite{DBS09} \\
		\hline
    Addition & $8OPT$-approx ~\cite{NSS00}, FPT~\cite{DBS09} & $8OPT$-approx ~\cite{NSS00}, $8d+2$-approx ~\cite{FMT09}, FPT~\cite{DBS09} \\
		\hline
    Total Addition & $O(\sqrt{\Delta}\log^4(n))$-approx ~\cite{AKR93}, FPT~\cite{DBS09} & 2-approx ~\cite{FMT09}, FPT~\cite{DBS09}\\
    \hline
  \end{tabular}
\end{center}

For the $k$-near problems, we show that the Unconstrained $k$-near Editing problem is NP-hard by adapting the NP-hardness proof for Threshold Editing from Drange et. al.~\cite{DBS09}. The remaining $k$-near problems have not been studied.

\section{Polynomial Time Algorithms for $k$-near Orderings}
\label{sect:exacts}

We present our polynomial time algorithm for the Constrained $k$-near Addition problem and state similar results for the Constrained $k$-near Editing problem, the Both $k$-near Addition and Editing problems, and the Unconstrained $k$-near Addition problem. The algorithms and analyses for the other polynomial time results use similar ideas as the one for Constrained $k$-near Addition. They are provided in detail in the Appendix~\ref{sect:fullProofs}. We also state the NP-hardness of the Unconstrained $k$-near Editing problem and provide the proof in the Appendix~\ref{sect:fullProofs}.

We assume correct orderings label the students from weakest (smallest label) to strongest (largest label) and label the questions from easiest (smallest label) to hardest (largest label). We associate each student with its initial label given by the $k$-near ordering. For ease of reading, we boldface the definitions essential to the analysis of our algorithm.

\begin{theorem}[Constrained $k$-near Editing]
\label{thm:constrainedEditing}
Constrained $k$-near Editing can be solved in time $O(n^3 k^{2k+2})$.
\end{theorem}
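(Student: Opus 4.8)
The plan is to exploit the fixed question ordering to collapse the edge-editing choices into a single ``level'' per student, and then to solve the resulting problem by a sliding-window dynamic program over the $|S|$ output positions, following Balas and Simonetti's~\cite{BS00} treatment of $k$-near TSP.

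First I would reformulate. Fix the given question ordering $\beta$, with $\beta^{-1}(1)$ the easiest question, and write $P_j:=\{\beta^{-1}(1),\dots,\beta^{-1}(j)\}$ for its prefix of length $j$. The interval property forces, in any feasible edited graph, $N(s)=P_{\ell(s)}$ for some level $\ell(s)\in\{0,1,\dots,|Q|\}$; and given this prefix structure the nested property on the student side is equivalent to requiring the output ordering $\pi$ to list students in non-decreasing order of $\ell$. Hence a solution is exactly a pair consisting of (i) a permutation $\pi$ with $\pi(s)\in[\alpha(s)-k,\alpha(s)+k]$ for all $s$, and (ii) levels $\ell(\cdot)$ non-decreasing along $\pi$; its cost is $\sum_s c(s,\ell(s))$ where $c(s,j):=|N(s)\,\triangle\,P_j|$ counts the edits turning $N(s)$ into $P_j$. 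From $c(s,0)=|N(s)|$ and $c(s,j)-c(s,j-1)=1-2\,[\,\beta^{-1}(j)\in N(s)\,]$, I can tabulate all $c(s,j)$ in $O(|S|\,|Q|)$ time. (Every instance is feasible, e.g.\ the identity ordering with all levels $0$, so there is no NO answer.)

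Next I would set up the dynamic program using the standard $k$-near window argument. If $\pi$ is $k$-near and positions $1,\dots,p$ have been filled, then every student with $\alpha(s)\le p-k$ is already placed (its largest allowed position is $\alpha(s)+k\le p$) and no student with $\alpha(s)\ge p+k+1$ is; since exactly $p$ students have been placed and exactly $p-k$ satisfy $\alpha(s)\le p-k$, the placed students with $\alpha$-value in the window $[p-k+1,p+k]$ form a $k$-subset of a $2k$-element set, and for an unplaced student the set of positions still available to it depends only on $\alpha(s)$ and $p$. So I would take DP states $(p,W,j)$, where $W$ is that $k$-subset and $j$ is the level of the student in position $p$ (carried to enforce monotonicity), giving $O\!\big(|S|\cdot\binom{2k}{k}\cdot|Q|\big)$ states. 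A transition to position $p+1$ picks one of the $O(k)$ unplaced students $s$ compatible with position $p+1$ together with a level $j'\ge j$, paying $c(s,j')$ (forcing the choice when a student with $\alpha(s)=p-k+1$ is still unplaced), which is $O(k\,|Q|)$ work per state; the answer is the least value at a state $(|S|,W,\cdot)$. The running time is $O\!\big(|S|\cdot\binom{2k}{k}\cdot|Q|\cdot k\,|Q|\big)=O(n^3k^{2k+2})$ using $\binom{2k}{k}\le 4^k\le k^{2k}$ and $|S|,|Q|\le n$.

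The main obstacle is the correctness argument, of which there are two parts. One must show the level reformulation is exact --- any $k$-near nested ordering respecting the interval property gives a non-decreasing level sequence along $\pi$ (immediate from the prefix characterization, cf.\ Proposition~\ref{prop:intervalEquivNested}), and conversely every such $(\pi,\ell)$ is realized by an edited graph of cost $\sum_s c(s,\ell(s))$ --- and, the harder part, that the window bookkeeping enumerates exactly the $k$-near permutations: that $(p,W)$ determines which students are legal for each later position, and that every $k$-near $\pi$ arises from some legal sequence of transitions. This is Balas and Simonetti's argument adapted here, and the key point to verify is precisely the counting claim above, that at each boundary $p$ the ambiguous students number $2k$ with exactly $k$ of them already placed, so that nothing outside a window of $O(k)$ students is ever needed. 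The rest --- the cost recurrence, boundary effects near positions $1$ and $|S|$, and the complexity arithmetic --- is routine.
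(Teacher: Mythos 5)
Your proposal is correct and follows essentially the same route as the paper: a position-by-position dynamic program whose state records which students from the $2k$-wide window around the current position have already been placed, together with the hardest question answered so far (your ``level'' $j$ plays exactly the role of the paper's $v_{j_i}$, and your per-student cost $c(s,j)=|N(s)\,\triangle\,P_j|$ is the paper's $c_{u_i,v_{j_i}}$). The only differences are cosmetic --- you encode the window as a $k$-subset rather than via permutations and pay the edit cost at the transition instead of at the state --- which if anything gives a slightly cleaner accounting of the same $O(n^3k^{2k+2})$ bound.
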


\begin{proof}
\sloppy 
Assume that the students are given in $k$-near order $1, \ldots,|S|$ and that the questions are given in exact order $1 \leq \cdots \leq |Q|$.
We construct a dynamic program for Constrained $k$-near Editing. First, we introduce the subproblems that we will consider. Define $\bm{C(i,u_i,U_i,v_{j_i})}$ to be the smallest number of edges incident to the weakest $i$ positions that must be edited such that $u_i$ is in position $i$, $U_i$ is the set of students in the weakest $i-1$ positions, and $v_{j_i}$ is the hardest question correctly answered by the $i$ weakest students. Before deriving the recurrence, we will define several sets that bound our search space within polynomial size of $n=|S|+|Q|$.

\noindent \textbf{Search Space for $U_i$.}
Given position $i$ and student $u_i$, define $\bm{P_{i,u_i}}$ to be the set of permutations on the elements in $\big[\max\{1,i-k\}, \min\{|S|,i+k-1\}\big] \setminus \{u_i\}$. Let $\bm{F_{i,u_i}}:=\Big\{\{\pi^{-1}(1),\ldots, \pi^{-1}(k)\}: \pi \in P_{i,u_i}, \pi(a) \in [a-k,a+k] \forall a \in \big[\max\{1,i-k\},\min\{|S|,i+k-1\}\big]\setminus \{u_i\} \Big\}$. The set $P_{i,u_i}$ includes all possible permutations of the $2k$ students centered at position $i$, and the set $F_{i,u_i}$ enforces that no student moves more than $k$ positions from its label. 
We claim that every element of $F_{i,u_i}$ is a candidate for $U_i \setminus \big[1,\max\{1,i-k-1\}\big]$ given that $u_i$ is assigned to position $i$. To understand the search space for $U_i$ given $i$ and $u_i$, observe that for all $i \geq 2$, $U_i$ already must include all of $\big[1,\max\{1,i-k-1\}\big]$ since any student initially at position $\leq i-k-1$ cannot move beyond position $i-1$ in a feasible solution. If $i=1$, we have $U_1 = \emptyset$. From now on, we assume $i \geq 2$ and treat the base case $i=1$ at the end. So the set $U_i \setminus \big[1,\max\{1,i-k-1\}\big]$ will uniquely determine $U_i$. We know that $U_i$ cannot include any students with initial label $[k+i, |S|]$ since students of labels $\geq k+i$ must be assigned to positions $i$ or later. So the only uncertainty remaining is which elements in $\big[\max\{1,i-k\}, \min\{|S|,i+k-1\}\big] \setminus \{u_i\}$ make up the set $U_i \setminus \big[1,\max\{1,i-k-1\}\big]$. We may determine all possible candidates for $U_i \setminus \big[1,\max\{1,i-k-1\}\big]$ by trying all permutations of $\big[\max\{1,i-k\}, \min\{|S|,i+k-1\}\big] \setminus \{u_i\}$ that move each student no more than $k$ positions from its input label, which is exactly the set $F_{i,u_i}$.

\noindent \textbf{Feasible and Compatible Subproblems.}
Next, we define $\bm{S_i}= \Big\{(u_i,U_i,v_{j_i}) : u_i \in \big[ \max \{1,i-k\}, \min \{|S|,i+k\} \big], U_i \setminus \big[1,\max\{1,i-k-1\}\big] \in F_{i,u_i}, v_{j_i} \in Q \cup \{0\} \Big\}$. The set $S_i$ represents the search space for all possible vectors $(u_i,U_i,v_{j_i})$ given that $u_i$ is assigned to position $i$. Note that $u_i$ is required to be within $k$ positions of $i$ by the $k$-near constraint. So we encoded this constraint into $S_i$. To account for the possibility that the $i$ weakest students answer no questions correctly, we allow $v_{j_i}$ to be in position $0$, which we take to mean that $U_i \cup \{u_i\}$ gave wrong answers to all questions.

\sloppy Now, we define $\bm{R_{i-1,u_i,U_i,v_{j_i}}}:= \{(u_{i-1}, U_{i-1}, v_{j_{i-1}}) \in S_{i-1}: v_{j_{i-1}} \leq v_{j_i}, U_i = \{u_{i-1}\} \cup U_{i-1}\}$. The set $R_{i-1,u_i,U_i,v_{j_i}}$ represents the search space for smaller subproblems that are compatible with the subproblem $(i, u_i, U_i, v_{j_i})$. More precisely, given that $u_i$ is assigned to position $i$, $U_i$ is the set of students assigned to the weakest $i-1$ positions, and $v_{j_i}$ is the hardest question correctly answered by $U_i \cup u_i$, the set of subproblems of the form $(i-1, u_{i-1}, U_{i-1}, v_{j_{i-1}})$ which do not contradict the aforementioned assumptions encoded by $(i, u_i, U_i, v_{j_i})$ are exactly those whose $(u_{i-1}, U_{i-1}, v_{j_{i-1}})$ belongs to $R_{i-1,u_i,U_i,v_{j_i}}$. We illustrate compatibility in Figure~\ref{fig:constrainedDP}.
\begin{figure}
\centering
  \includegraphics[scale=.35]{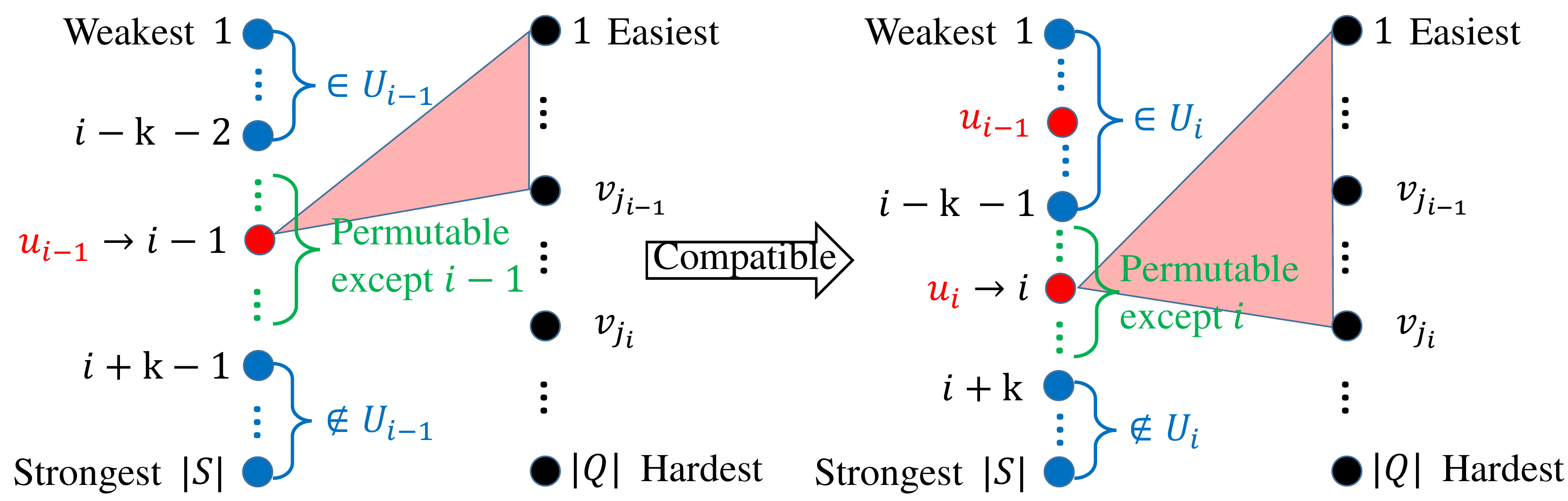}
  \caption{Subproblem $(i-1, u_{i-1}, U_{i-1}, v_{j_{i-1}})$ is compatible with subproblem $(i, u_i, U_i, v_{j_i})$ if and only if $v_{j_{i-1}}$ is no harder than $v_{j_i}$ and $U_i = \{u_{i-1}\} \cup U_{i-1}$. The cost of $(i, u_i, U_i, v_{j_i})$ is the sum of the minimum cost among feasible compatible subproblems of the form $(i-1, u_{i-1}, U_{i-1}, v_{j_{i-1}})$ and the minimum number of edits incident to $u_i$ to make its neighborhood exactly $\{1,\ldots,v_{j_i}\}$.}
  \label{fig:constrainedDP}
\end{figure}

\noindent \textbf{The Dynamic Program.}
Finally, we define $\bm{c_{u_i,v_{j_i}}}$ to be the smallest number of edge edits incident to $u_i$ so that the neighborhood of $u_i$ becomes exactly $\{1, \ldots, v_{j_i}\}$, i.e. $c_{u_i,v_{j_i}} := |N_G(u_i) \triangle \{1, \ldots, v_{j_i}\}|$. We know that $c_{u_i,v_{j_i}}$ is part of the cost within $C(i,u_i,U_i,v_{j_i})$ since $v_{j_i}$ is the hardest question that $U_i \cup \{u_i\}$ is assumed to answer correctly and $u_i$ is a stronger student than those in $U_i$ who are in the positions before $i$. 
We obtain the following recurrence.

$$C(i,u_i,U_i,v_{j_i})=\min_{(u_{i-1},U_{i-1},v_{j_{i-1}}) \in R_{i-1,u_i,U_i,v_{j_i}}} \{C(i-1,u_{i-1},U_{i-1},v_{j_{i-1}})\} + c_{u_i,v_{j_i}}$$

The base cases are $C(1,u_1,U_1,v_{j_1}) = |N_G(u_1) \triangle \{1,\ldots,v_{j_1}\}|$ if $v_{j_1} > 0$, and $C(1,u_1,U_1,v_{j_1}) = |N_G(u_1)|$ if $v_{j_1} = 0$ for all $u_1 \in [1,1+k], v_{j_1} \in Q \cup \{0\}$.

\sloppy By definition of our subproblems, the final solution we seek is $\min_{(u_{|S|},U_{|S|},v_{j_{|S|}}) \in S_{|S|}} C(|S|,u_{|S|},U_{|S|},v_{j_{|S|}})$.

\noindent \textbf{Running Time.}
Now, we bound the run time of the dynamic program. Note that before running the dynamic program, we build the sets $P_{i,u_i}$, $F_{i,u_i}$, $S_i$, $R_{i-1,u_i,U_i,v_{j_i}}$ to ensure that our solution obeys the $k$-near constraint and that the smaller subproblem per recurrence is compatible with the bigger subproblem it came from. Generating the set $P_{i,u_i}$ takes $(2k)!=O(k^k)$ time per $(i,u_i)$. Checking the $k$-near condition to obtain the set $F_{i,u_i}$ while building $P_{i,u_i}$ takes $k^2$ time per $(i,u_i)$. So generating $S_i$ takes $O(k \cdot k^k k^2 \cdot |Q|)$ time per $i$. Knowing $S_{i-1}$, generating $R_{i-1, u_i, U_i, v_{j_i}}$ takes $O(|S|)$ time. Hence, generating all of the sets is dominated by the time to build $\cup_{i \leq |S|} S_i$, which is $O(|S| k^3 k^k |Q|)=O(n^2 k^{k+3})$.

After generating the necessary sets, we solve the dynamic program. Each subproblem $(i,u_i,U_i,v_{j_i})$ takes $O(|R_{i-1,u_i,U_i,v_{j_i}})|$ time. So the total time to solve the dynamic program is $O(\sum_{i \in S, (u_i,U_i,v_{j_i}) \in S_i} |R_{i-1,u_i,U_i,v_{j_i}}|)=O(|S||S_i||S_{i-1}|)=O(n(k \cdot k^k \cdot n)^2)=O(n^3 k^{2k+2})$.
\end{proof}

\begin{restatable}[Constrained $k$-near Addition]{theorem}{constrainedAddition}
\label{thm:constrainedAddition}
Constrained $k$-near Addition can be solved in time $O(n^3 k^{2k+2})$.
\end{restatable}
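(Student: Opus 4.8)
The plan is to reuse, essentially verbatim, the dynamic-programming skeleton from the proof of Theorem~\ref{thm:constrainedEditing}, changing only the local cost term to reflect that deletions are forbidden. I would keep the same subproblems $C(i,u_i,U_i,v_{j_i})$ — now interpreted as the minimum number of edge \emph{additions} incident to the weakest $i$ positions such that $u_i$ is in position $i$, $U_i$ is the set of students in positions $1,\ldots,i-1$, and $v_{j_i}$ is the hardest question answered correctly by those $i$ weakest students. The search-space sets $P_{i,u_i}$, $F_{i,u_i}$, $S_i$, and $R_{i-1,u_i,U_i,v_{j_i}}$ are defined exactly as in the Editing proof, so the $k$-near constraint on the student order, and the compatibility between consecutive levels (prefix-monotonicity $v_{j_{i-1}}\le v_{j_i}$ together with $U_i=\{u_{i-1}\}\cup U_{i-1}$) that enforces the nested property, are handled identically.

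The one change is the definition of $c_{u_i,v_{j_i}}$: it should count the additions needed to turn $N_G(u_i)$ into the prefix block $\{1,\ldots,v_{j_i}\}$. Since no deletions are allowed, this is feasible only when $N_G(u_i)\subseteq\{1,\ldots,v_{j_i}\}$, in which case $c_{u_i,v_{j_i}}=v_{j_i}-|N_G(u_i)|$; otherwise we set $c_{u_i,v_{j_i}}=+\infty$ (equivalently, we simply discard that subproblem). With this redefined $c$, the recurrence, the base cases, and the final answer are literally those of Theorem~\ref{thm:constrainedEditing}.

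For correctness I would argue the following equivalence. In any feasible Addition solution the output graph has the interval property for the fixed question order and the nested property for some $k$-near student order, so each student's final neighborhood is a prefix $\{1,\ldots,t\}$ that contains its original neighborhood, and these prefixes are nondecreasing along the student order. Conversely, any choice of a $k$-near student order together with prefix targets that are nondecreasing and each containing the corresponding original neighborhood yields a feasible Addition solution whose total number of additions is $\sum_i c_{u_i,v_{j_i}}$, because additions incident to distinct students are disjoint. Hence finite-cost DP paths are in cost-preserving bijection with feasible Addition solutions, and the DP — which ranges over exactly these paths subject to exactly these constraints — returns the optimum. The only point requiring a little care is the $+\infty$ bookkeeping: one must check that a student whose original neighborhood cannot fit inside any admissible $v_{j_i}$ is correctly ruled out, and that this interacts cleanly with the $k$-near feasibility already encoded in $F_{i,u_i}$ and $S_i$ (it does, since the two restrictions act on disjoint parts of the subproblem descriptor).

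Finally, the running-time bound is inherited unchanged: the preprocessing to build $\bigcup_{i\le|S|}S_i$ dominates at $O(n^2 k^{k+3})$, each $c_{u_i,v_{j_i}}$ is computed in $O(|Q|)$ time (which is absorbed), and solving the DP costs $O\big(\sum_{i,\,(u_i,U_i,v_{j_i})\in S_i}|R_{i-1,u_i,U_i,v_{j_i}}|\big)=O(|S|\,|S_i|\,|S_{i-1}|)=O(n^3 k^{2k+2})$. I do not expect a genuine obstacle here: the whole content is the observation that the additions-only restriction is purely local — it collapses into the single term $c_{u_i,v_{j_i}}$ — and leaves the combinatorial structure of the dynamic program, and therefore its analysis, intact.
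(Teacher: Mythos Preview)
Your proposal is correct and follows essentially the same approach as the paper. The only cosmetic difference is that the paper encodes the no-deletion constraint by restricting the search space $S_i$ (adding the requirement $v_{j_i}\ge\max N_G(\{u_i\}\cup U_i)$), whereas you encode it via $c_{u_i,v_{j_i}}=+\infty$ when $N_G(u_i)\not\subseteq\{1,\ldots,v_{j_i}\}$; along any finite-cost DP path these are equivalent (the monotonicity $v_{j_{i-1}}\le v_{j_i}$ propagates the per-student check to all of $U_i$), and the running-time analysis is identical.
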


\vspace{-8pt}

\begin{restatable}[Unconstrained $k$-near Addition]{theorem}{unconstrainedAddition}
\label{thm:unconstrainedAddition}
Unconstrained $k$-near Addition can be solved in time $O(n^3 k^{2k+2})$.
\end{restatable}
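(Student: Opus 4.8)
The plan is to mimic the dynamic program of Theorem~\ref{thm:constrainedEditing}, but with two modifications that reflect (i) we are doing Addition rather than Editing, and (ii) the question side is now unconstrained rather than fixed. The Addition change is actually a simplification: as noted in the paper's discussion of main results, in Chain Addition the correct neighborhood of a student can be inferred from the neighborhoods of all weaker students. Concretely, if $u_i$ occupies position $i$ and $U_i$ is the set of students in the weakest $i-1$ positions, then since we may only add edges, the neighborhood of every student in $U_i\cup\{u_i\}$ must contain $\bigcup_{s\in U_i} N_G(s)$; and the cheapest way to complete $u_i$ to a nested chain above $U_i$ is to add exactly the edges needed so that $N(u_i) = \bigcup_{s \in U_i \cup \{u_i\}} N_G(s)$. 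So the cost contributed by placing $u_i$ at position $i$, given the "frontier'' of questions already forced, is just the number of questions in that frontier not already in $N_G(u_i)$.

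\textbf{Handling the unconstrained question side.} The key observation is that we do not need to guess an ordering of the questions at all: once the student ordering $\pi$ is fixed, Proposition~\ref{prop:oneSide} tells us the optimal question ordering (and hence the optimal edge set) is determined. Equivalently, in the Addition setting, once we decide which student sits in each position, the chain is forced to be the "union chain'': position $i$ gets neighborhood $\bigcup_{\ell \le i} N_G(\pi^{-1}(\ell))$, and the total number of added edges is $\sum_i \big| \big(\bigcup_{\ell \le i} N_G(\pi^{-1}(\ell))\big) \setminus N_G(\pi^{-1}(i)) \big|$. So the entire problem reduces to: choose a $k$-near permutation $\pi$ of the students minimizing this sum. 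This is exactly the structure the Constrained $k$-near DP already exploits, except that the "hardest question answered so far'' index $v_{j_i}$ is no longer a free variable ranging over $Q\cup\{0\}$ — it is \emph{determined} by the set of students placed so far, namely $v_{j_i} = \big|\bigcup_{s \in U_i \cup \{u_i\}} N_G(s)\big|$ if we relabel questions by how many students (in union order) cover them. More carefully: I would define the DP state as $(i, u_i, U_i)$ — current position, student in it, set of students strictly before it — and set $D(i,u_i,U_i)$ to be the minimum number of edge additions incident to the weakest $i$ positions consistent with this placement. The frontier of forced questions is a function of $U_i \cup \{u_i\}$ alone, so it need not be carried in the state; this is why the $|Q|$ factor drops and we still land at $O(n^3 k^{2k+2})$ rather than something larger.

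\textbf{The recurrence and search-space bounds.} Reuse verbatim the machinery bounding $U_i$: the sets $P_{i,u_i}$, $F_{i,u_i}$, and $S_i$ (now with the $v_{j_i}$ coordinate deleted) are built exactly as before, since they only encode the $k$-near constraint on which students can occupy which positions, which is unchanged. For compatibility, $(i-1,u_{i-1},U_{i-1})$ is compatible with $(i,u_i,U_i)$ iff $U_i = \{u_{i-1}\}\cup U_{i-1}$; the monotonicity of the question frontier is automatic here because adding a student to $U_i$ can only enlarge the union of neighborhoods. Let $W(U) := \big|\bigcup_{s\in U} N_G(s)\big|$ denote the frontier size; the cost of extending by $u_i$ at position $i$ is $a_{i} := W(U_i\cup\{u_i\}) - |N_G(u_i)|$, i.e.\ the number of forced questions $u_i$ does not already answer. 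Wait — one must be slightly careful: the forced frontier for $u_i$ is $\bigcup_{s \in U_i} N_G(s)$ together with $N_G(u_i)$ itself, and since $u_i$ is the strongest so far, its \emph{final} neighborhood must be exactly this union, so the additions incident to $u_i$ number $\big|\bigcup_{s\in U_i}N_G(s) \setminus N_G(u_i)\big| = W(U_i\cup\{u_i\}) - |N_G(u_i)|$. This gives
\[
D(i,u_i,U_i) \;=\; \min_{(u_{i-1},U_{i-1})\in R_{i-1,u_i,U_i}} D(i-1,u_{i-1},U_{i-1}) \;+\; \Big(W(U_i\cup\{u_i\}) - |N_G(u_i)|\Big),
\]
with base case $D(1,u_1,\emptyset) = 0$ for each admissible $u_1\in[1,1+k]$ (the strongest-so-far student at position $1$ has no weaker student forcing extra edges, and we may add zero edges), and the answer is $\min_{(u_{|S|},U_{|S|})\in S_{|S|}} D(|S|,u_{|S|},U_{|S|})$.

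\textbf{Running time and the main obstacle.} The running-time analysis is essentially the one in Theorem~\ref{thm:constrainedEditing} with the $|Q|$ factor in $|S_i|$ removed: $|S_i| = O(k\cdot k^k)$, each subproblem scans $|R_{i-1,\cdot}| = O(|S_{i-1}|) = O(k\cdot k^k)$ predecessors, there are $O(n)$ values of $i$, so the DP runs in $O(n\cdot (k\,k^k)^2) = O(n\,k^{2k+2})$, and precomputing all the $W(U)$ values (and the $c$-costs) over the $O(n\,k\,k^k)$ relevant state sets costs $O(n^2 k^{k+1})$ or so — dominated, in any case, by $O(n^3 k^{2k+2})$ after absorbing the cheaper preprocessing, matching the claimed bound. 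I expect the main obstacle to be the \emph{correctness} argument rather than the running time: one must argue that restricting to the "union chain'' neighborhoods loses nothing in the Addition setting — i.e., that any optimal $k$-near addition solution can be assumed to have position-$i$ neighborhood equal to $\bigcup_{\ell\le i}N_G(\pi^{-1}(\ell))$ — and that the DP's local cost $W(U_i\cup\{u_i\}) - |N_G(u_i)|$ correctly and without double-counting accounts for exactly the edges added incident to position $i$ (edges added to a question counted at the position of the weakest student newly covering it). This is the same style of exchange/telescoping argument underlying Proposition~\ref{prop:oneSide}, and I would prove it by induction on $i$: if $N(\pi^{-1}(\ell)) \supsetneq \bigcup_{m\le\ell}N_G(\pi^{-1}(m))$ for some $\ell$, then nestedness forces every $\ell'>\ell$ to contain that extra question too, and deleting those additions keeps the solution feasible (still nested, still $k$-near, since we only change edges not positions) and strictly cheaper — contradiction.
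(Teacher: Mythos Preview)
Your proposal is correct and takes essentially the same approach as the paper: the paper's proof also drops the $v_{j_i}$ coordinate from the state, works with subproblems $(i,u_i,U_i)$, defines the incremental cost as $c_{u_i,U_i}:=|N_G(U_i\cup\{u_i\})\setminus N_G(u_i)|$ (which is your $W(U_i\cup\{u_i\})-|N_G(u_i)|$), and reuses the $P_{i,u_i}$, $F_{i,u_i}$, $S_i$, $R_{i-1,\cdot}$ machinery with the same compatibility condition $U_i=\{u_{i-1}\}\cup U_{i-1}$. If anything, your write-up is more complete: you supply the exchange argument that the ``union chain'' is optimal for Addition, which the paper's proof leaves implicit.
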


\vspace{-8pt}

\begin{restatable}[Unconstrained $k$-near Editing]{theorem}{NPhard}
\label{thm:NPhard}
Unconstrained $k$-near Editing is NP-hard.
\end{restatable}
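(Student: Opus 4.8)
The plan is to give a polynomial-time many-one reduction to Unconstrained $k$-near Editing from ordinary Chain Editing, which is NP-hard by \cite{DDLS15,Y81}; concretely, I would reuse and augment the gadget construction that Drange et al.\ \cite{DDLS15} use for the NP-hardness of Chain (and Threshold) Editing. An instance of Chain Editing is just a bipartite graph $G=(S\cup Q,E)$ and a budget $B$, asking whether $B$ edge edits suffice for some ordering of $S$ to be nested; an instance of Unconstrained $k$-near Editing additionally carries a base ordering $\alpha$ of the students and a parameter $k$, with the questions left unconstrained. The entire difficulty of the reduction is to manufacture a pair $(\alpha,k)$ so that the $k$-near constraint is respected by every budget-feasible solution while keeping $k$ bounded (ideally a constant), so that NP-hardness is obtained even for fixed $k$ and thus genuinely contrasts with the polynomial-time algorithm for Unconstrained $k$-near Addition.

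\textbf{Construction.} The hard instances of \cite{DDLS15} are assembled from bounded-width gadgets (variable gadgets and clause/consistency gadgets) laid out left to right; in any chain ordering meeting the budget, the vertices of a single gadget occupy $O(1)$ consecutive positions and the coarse order of the gadgets is forced, while the only genuine combinatorial choices (the truth values) are internal to these constant-size windows. I would insert, between consecutive gadgets, \emph{anchor} gadgets: small families of students and questions whose neighbourhoods form a strictly nested chain, replicated in enough parallel copies that (i) in the intended layout the anchors need $0$ edits and rigidly fix the coarse left-to-right order of everything, and (ii) displacing any anchor vertex, or any gadget vertex, by more than the maximum window width $w$ strictly increases the number of required edits beyond the new budget. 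Define $\alpha$ as the student order read off from this intended layout, set $k:=w$ (a constant coming only from the gadget design), and set the new budget $B':=B+(\text{the fixed anchor cost})$.

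\textbf{Correctness and the expected obstacle.} Forward direction: a chain editing of $G$ within $B$ extends, together with the untouched anchors, to a chain editing of the augmented graph within $B'$ whose induced student order differs from $\alpha$ only inside individual windows of width $\le k$, hence is $k$-near. Reverse direction: by property (ii) any $k$-near editing within $B'$ must leave the anchors un-edited in their canonical coarse positions, so its restriction to $G$ is a chain editing of $G$ within $B$, and feasibility of the $k$-near constraint is automatic for exactly the orderings that can meet $B'$. I expect the main obstacle to be establishing (ii): the anchors must be simultaneously cheap to satisfy in the canonical layout and prohibitively expensive to perturb over long range, and one must check that inserting them creates no new induced $2K_2$ that a clever global reshuffle could repair cheaply --- that is, that the augmentation does not accidentally make $G$ easier to edit. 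Should threading anchors through the specific gadgets of \cite{DDLS15} prove too delicate, a weaker but still sufficient fallback is the trivial reduction with $k:=|S|$, for which the $k$-near constraint is vacuous and the problem coincides verbatim with Chain Editing, establishing the theorem at least for $k$ growing with the input size.
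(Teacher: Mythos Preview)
Your fallback argument with $k=|S|$ is valid and does establish the theorem as literally stated, since with $k\ge|S|$ the $k$-near constraint is vacuous and the problem collapses to ordinary Chain Editing. But this is a much weaker statement than what the paper proves, namely hardness already for $k=1$; and your main approach toward constant $k$ is only a sketch in which you yourself flag the crucial step (property (ii)) as unresolved.

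The paper's route is both simpler and stronger, and the key observation you are missing is that no \emph{additional} anchor gadgets are needed at all. The paper reduces directly from 3-SAT using exactly the Drange et~al.\ construction for Threshold Editing: each variable $v_i$ gives six students $s^i_a,s^i_b,s^i_f,s^i_t,s^i_c,s^i_d$, and the partial order $P$ (with $s^i_a>s^i_b>s^i_f,s^i_t>s^i_c>s^i_d$ and the blocks for different variables totally ordered) is enforced by adding, for each relation $s>s'$ in $P$, a pack of $t_\Phi+1$ fresh questions adjacent to $s$ and everything above it but not to $s'$. These packs are precisely the ``anchors'' you propose to invent: they already force every budget-feasible ordering to respect $P$, and the \emph{only} freedom left is the relative order of the two adjacent students $s^i_f$ and $s^i_t$ inside each variable block. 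Consequently one simply takes $\pi_\Phi$ to be any linear extension of $P$ and sets $k=1$; every optimal chain editing is then automatically $1$-near to $\pi_\Phi$, and conversely any $1$-near editing within the budget $t_\Phi=|C|(3|V|-1)$ decodes a satisfying assignment via the choice $s^i_t>s^i_f$ or $s^i_f>s^i_t$. So rather than threading new anchors through the gadgets and worrying about unintended $2K_2$'s, the whole reduction is obtained by reading off the $1$-near property that Drange et~al.'s construction already possesses.
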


\vspace{-8pt}

\begin{restatable}[Both $k$-near Editing]{theorem}{bothEditing}
\label{thm:bothEditing}
Both $k$-near Editing can be solved in time $O(n^3 k^{4k+4})$.
\end{restatable}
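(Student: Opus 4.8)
The plan is to build on the dynamic program of Theorem~\ref{thm:constrainedEditing}. In the Constrained case the question order was fixed, so a single value $v_{j_i}$ — the hardest question answered by the $i$ weakest students — pinned down the entire target neighborhood $\{1,\dots,v_{j_i}\}$ of the student placed in position $i$. Now the question order is itself only required to be $k$-near with respect to a given base ordering, so I would carry a second group of coordinates describing the question side that mirrors the student-side triple $(i,u_i,U_i)$: a length $\ell_i$ (the number of questions answered correctly by the $i$ weakest students), the question $w_{\ell_i}$ placed in question-position $\ell_i$, and the set $W_{\ell_i}$ of questions placed in the easiest $\ell_i-1$ positions. Exactly as on the student side, a question with base label $\le \ell_i-k-1$ must land among the easiest $\ell_i$ positions and one with base label $\ge \ell_i+k+1$ cannot, so $W_{\ell_i}$ is determined by $\ell_i$ together with its restriction to the window $[\ell_i-k,\ell_i+k-1]$; I would construct the question-side analogues of the sets $P_{i,u_i}$ and $F_{i,u_i}$ to enumerate precisely the $k$-near-feasible windowed configurations. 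The subproblem then becomes $C(i,u_i,U_i,\ell_i,w_{\ell_i},W_{\ell_i})$, the minimum number of edits on edges incident to the weakest $i$ positions consistent with all of these choices.

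The recurrence is set up as before. A subproblem at step $i-1$ is compatible with one at step $i$ when $U_i=\{u_{i-1}\}\cup U_{i-1}$, when the question prefix grows (so $\ell_{i-1}\le \ell_i$ and $W_{\ell_{i-1}}\cup\{w_{\ell_{i-1}}\}\subseteq W_{\ell_i}\cup\{w_{\ell_i}\}$, with the windowed restrictions agreeing on their overlap), and when every question newly inserted into positions $\ell_{i-1}+1,\dots,\ell_i$ lands within $k$ of its base label. The cost added at step $i$ is $c_{u_i}=|N_G(u_i)\triangle(\text{easiest }\ell_i\text{ questions})|$, which the windowed description of the prefix lets us evaluate; the base case at $i=1$ is the obvious analogue of the one in Theorem~\ref{thm:constrainedEditing}. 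The optimum is the minimum of $C(|S|,\cdot)$ over all terminal states, and a $k$-near question ordering achieving it is recovered from the chain of chosen prefixes, with any questions answered by no student in the edited graph placed in the remaining positions.

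The step I expect to be the main obstacle, and the reason this is more than a mechanical rewrite, is the bookkeeping that guarantees the sequence of windowed question-prefixes picked along a DP path really arises from one globally consistent $k$-near question ordering: unlike the student pointer $i$, the length $\ell_i$ may jump by more than one between consecutive steps, so a whole batch of questions is inserted at once and the within-batch order matters for the $k$-near test. I would deal with this by treating a batch as inserted one position at a time — effectively interleaving ``student steps'' with ``question steps'' in a single dynamic program over the interleaved sequence of position insertions on the two sides (equivalently, recording in the state which window questions have already been committed to positions) — so that each question's $k$-near constraint is checked against its true final position. This is the question-side mirror of why $F_{i,u_i}$ and the relation $R_{i-1,u_i,U_i,v_{j_i}}$ are adequate on the student side, and its correctness reduces to the same interval/window argument used there together with Proposition~\ref{prop:intervalEquivNested}.

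For the running time, the state carries the student-side coordinates $(u_i,U_i)$, contributing $O(k)$ and $O(k^k)$ windowed choices as in Theorem~\ref{thm:constrainedEditing}, and in addition the question-side coordinates $(w_{\ell_i},W_{\ell_i})$ contributing another $O(k)$ and $O(k^k)$, while $\ell_i$ ranges over $O(n)$ values; hence $|S_i|=O(n\,k^{2k+2})$. Building the auxiliary sets $P$, $F$, the sets $S_i$, and the compatibility sets is dominated, as in the Constrained proof, by the main loop, and the loop costs $O\big(|S|\cdot|S_i|\cdot|S_{i-1}|\big)=O\big(n\cdot(n\,k^{2k+2})^2\big)=O(n^3 k^{4k+4})$, which matches the claimed bound.
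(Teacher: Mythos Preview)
Your proposal is essentially the paper's own argument: the same six-coordinate state $(i,u_i,U_i,\ell_i,w_{\ell_i},W_{\ell_i})$ (the paper writes $(i,u_i,U_i,j_i,v_{j_i},V_{j_i})$), the same window sets $P$ and $F$ duplicated on the question side, the same nesting-based compatibility relation, and the same $O(|S|\cdot|S_i|\cdot|S_{i-1}|)=O(n^3k^{4k+4})$ count.

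The one point where you diverge is the ``batch insertion'' worry: since $\ell_i$ can jump by more than one, you propose interleaving student and question steps so that each question's final position is checked individually. The paper does not do this; it simply requires, between consecutive states, that $j_{i-1}\le j_i$ and $V_{j_{i-1}}\cup\{v_{j_{i-1}}\}\subseteq V_{j_i}\cup\{v_{j_i}\}$ (with $v_{j_{i-1}}\in V_{j_i}$ when $j_{i-1}<j_i$), together with the local window constraint $V_{j_i}\setminus[1,j_i-k-1]\in F_{j_i,v_{j_i}}$ at every step. That is in fact sufficient: if each prefix $A=V_{j_i}\cup\{v_{j_i}\}$ of size $m=j_i$ satisfies $[1,m-k]\subseteq A\subseteq[1,m+k]$ (which the window constraint enforces) and the prefixes are nested, then placing each block $A_i\setminus A_{i-1}$ in increasing order yields a single permutation that is $k$-near at \emph{every} position, by a short counting argument. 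So your interleaving is unnecessary, though harmless — it would also be correct and, as you note, stays within the same time bound since the outer loop becomes $|S|+|Q|=O(n)$. Your concern is legitimate in that the paper states the compatibility relation without justifying this global-consistency step.
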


\vspace{-8pt}

\begin{restatable}[Both $k$-near Addition]{theorem}{bothAddition}
\label{thm:bothAddition}
Both $k$-near Addition can be solved in time $O(n^3 k^{4k+4})$.
\end{restatable}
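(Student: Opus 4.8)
The plan is to extend the dynamic program from Theorem~\ref{thm:constrainedEditing} (and its Addition analogue, Theorem~\ref{thm:constrainedAddition}) by also tracking enough information about the question side to enforce that the question ordering is $k$-near rather than fixed. First I would recall that in the Addition setting, once we fix which $i$ students occupy the weakest $i$ positions, the cheapest way to complete their neighborhoods is forced: the neighborhood of the student in position $i$ must contain the union of all weaker students' neighborhoods, and the cost is simply the number of non-edges that must be filled in. So the only genuine new ingredient relative to Constrained $k$-near Addition is that the question positions are no longer given — we must choose a $k$-near question ordering as well, and the ``hardest question answered by the $i$ weakest students'' must be specified relative to that chosen ordering.

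The key idea is to augment the subproblem state. In the constrained case the state was $(i, u_i, U_i, v_{j_i})$ where $v_{j_i}$ was a question index in the fixed ordering. Now I would let the state carry, in addition to $(i, u_i, U_i)$, a description of the local window of the question ordering around the current ``frontier'' — concretely, which questions occupy the block of positions near $j_i$, since by the $k$-near constraint only the $O(k)$ questions whose initial labels lie within $k$ of a given position can occupy that position. By the same counting as in the proof of Theorem~\ref{thm:constrainedEditing}, there are only $O(k \cdot k^k)$ choices for such a local window (a permutation of $2k$ elements respecting the $k$-near displacement bound), and we also record the identity of the hardest question answered so far. Compatibility between consecutive student layers then requires, as before, that the student sets nest and that the new frontier question is no easier than the old one, \emph{plus} that the two local question windows are consistent where they overlap (they describe the same underlying ordering). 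The cost contributed by position $i$ is the number of edge additions needed to make $u_i$'s neighborhood exactly the down-set $\{q : \beta(q) \le \beta(v_{j_i})\}$ in the chosen question ordering, which is determined once the window and the hardest-question marker are fixed.

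For the running time: the student part contributes $O(n)$ layers, each with $O(k \cdot k^k \cdot n)$ choices of $(u_i, U_i, v_{j_i})$ exactly as before; the extra question-window component multiplies the state space by another $O(k \cdot k^k)$ factor, and transitions between layers again range over the product of two consecutive (now enlarged) state spaces. This yields $O\!\big(n \cdot (k \cdot k^k \cdot k \cdot k^k \cdot n)^2\big) = O(n^3 k^{4k+4})$, matching the claimed bound; and the preprocessing to build all the window sets and compatibility relations is lower order, just as in Theorem~\ref{thm:constrainedEditing}.

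The main obstacle I anticipate is making precise, and proving correct, the claim that a constant-width (width $O(k)$) sliding window of the question ordering is \emph{sufficient} state — i.e., that no information from far-away questions is needed to decide either feasibility of the $k$-near question constraint or the incremental cost. The point to nail down is that a question with initial label $\ell$ can only be placed in positions $[\ell-k, \ell+k]$, so once the student frontier has advanced past the positions that window covers, that question's placement is frozen and never interacts with later decisions; hence the DP need only remember the window straddling the current hardest-question marker. I would also have to argue that stitching together locally-consistent windows along the $n$ layers actually reconstructs a single globally valid $k$-near question permutation — this is a standard ``consistent overlaps imply global consistency'' argument for sliding windows over a linear order, but it needs to be stated carefully, and it is where the Both-near analysis genuinely differs from the Constrained one.
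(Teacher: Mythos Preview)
Your proposal is essentially the paper's approach. The paper augments the Constrained $k$-near state $(i,u_i,U_i,v_{j_i})$ to $(i,u_i,U_i,j_i,v_{j_i},V_{j_i})$, where $V_{j_i}$ is the set of questions in the first $j_i-1$ positions---determined, modulo a forced prefix, by exactly the $O(k\cdot k^k)$-size window you describe (via the same $F_{j_i,v_{j_i}}$ machinery used on the student side)---and the compatibility set $R$ enforces the overlap/containment conditions you anticipate; the running-time accounting is identical to yours. The one detail you leave implicit but the paper states explicitly is the Addition feasibility constraint $V_{j_i}\cup\{v_{j_i}\}\supseteq N_G(\{u_i\}\cup U_i)$, which must be added to the state space since edges cannot be deleted; your first paragraph gestures at this but does not quite pin it down for the Both-near setting, where the down-set is a choice rather than forced.
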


We present the proofs of the above theorems in the Appendix~\ref{sect:fullProofs}.

\section{Conclusion}
\label{sect:conclusion}
We proposed a new set of problems that arise naturally from ranking participants and tasks in competitive settings and classified the complexity of each problem. First, we introduced six $k$-near variants of the Chain Editing problem, which capture the common scenario of having partial information about the final orderings from past rankings. Second, we provided polynomial time algorithms for five of the problems and showed NP-hardness for the remaining one.\\

\noindent \textbf{Acknowledgments}

\noindent 
This work was supported in part by the US National Science Foundation under award numbers CCF-1527032, CCF-1655442, and IIS-1553547.

\bibliographystyle{splncs03}
\bibliography{database}

 \newpage
 \appendix
 \section{Nomenclature}
 \begin{center}
 \captionof{table}{Overview of notation used in this paper}
   \begin{tabular}{ | l | l | }
     \hline
 		$S$ & set of students \\
     $Q$ & set of questions \\
     $G$ & graph $(S \cup Q, E)$ such that $sq \in E$ iff $s$ answers $q$ correctly\\
     $N(v)$ & neighborhood of $v$ in $G$\\
 		$i$ & position $i$ of a student ranking from weakest to strongest\\
 		$u_i$ & student at position $i$\\
 		$U_i$ & unordered set of $i-1$ students weaker than the $i$th student\\
 		$j_i$ & position of the hardest question correctly answered by the $i$th student\\
 		$v_{j_i}$ & question in position $j_i$ in a question ranking from easiest to hardest\\
 		$V_{j_i}$ & unordered set of questions easier than the $j_i$th question\\
 		\hline
   \end{tabular}
 \end{center}
 \section{Proofs}
 \label{sect:fullProofs}
 In this Section, we present all proofs that were omitted in the main body.

 \IMO*
 \begin{proof}
 \label{pf:IMO}
 Compare the neighborhood of every pair of students $\{s_1, s_2\} \subset S$ and check whether $N(s_1) \subset N(s_2)$ or $N(s_1) \supset N(s_2)$. If $N(s_1) \cap N(s_2)$ is a strict subset of $N(s_1)$ and $N(s_2)$, then output NO. Now, assuming that every pair $\{s_1, s_2\} \subset S$ satisfies $N(s_1) \subset N(s_2)$ or $N(s_1) \supset N(s_2)$, we know that there is an ordering $\alpha: S \rightarrow [|S|]$ such that $\alpha(s_1) \leq \alpha(s_2) \Rightarrow N(s_2) \subset N(s_2)$. We easily find such an ordering by sorting the students according to their degrees, i.e., from lowest to highest degree, the students will receive labels from the smallest to the largest. Denote the resulting ordering by $\pi$. Since all neighborhoods are subsets or supersets of any other neighborhood and we sorted by degree, $\pi(s_1) \leq \pi(s_2) \Rightarrow N(s_1) \leq N(s_2)$. So we have satisfied the nested property.

 To satisfy the interval property, we order the questions according to the nesting of the neighborhoods. Recall that we have $N(\pi^{-1} (1)) \subset \cdots \subset N(\pi^{-1} (|S|))$. Now, we order the questions so that whenever $q_1 \in N(\pi^{-1}(i))$ and $q_2 \in N(\pi ^{-1}(j))$ with $i < j$, we have $q_1$ labeled smaller $q_2$ according to the ordering. We can do so by labeling the questions in $N(\pi^{-1} (1))$ the smallest numbers (the ordering within the set does not matter), then the questions in $N(\pi^{-1} (2))$ the next smallest, and so on. Call the resulting ordering $\beta$. Note that for all $s \in S$, $s=\pi^{-1}(i)$ for some $i$. So $N(s)=N(\pi^{-1}(i)) \supset N(\pi^{-1}(1))$, i.e., $s$ correctly answers the easiest question according to $\beta$. Furthermore, $N(s)$ is a block of questions that are consecutive according to the ordering $\beta$. So the interval property is also satisfied.

 To determine the run time, note that we made $O(n^2)$ comparisons of neighborhoods. Each set intersection of two neighborhoods took $O(n)$ time assuming that each neighborhood was stored as a sorted list of the questions (sorted by any fixed labeling of the questions). Ordering the students by degree took $O(n \log n)$ time and ordering the questions took $O(n)$ time. So the total run time is $O(n^2)$.
 \end{proof}

 \intervalEquivNested*
 \begin{proof}
 \label{pf:intervalEquivNested}
 First, we prove the forward direction. Assume that $G=(S \cup Q, E)$ satisfies the interval property with respect to the ordering $\beta$ on $Q$. By definition of interval property, for every $u \in S$, we have $N(u) = \{\beta^{-1}(1),\ldots, \beta^{-1}(j)\}$ for some $j \in [|Q|]$. Then for every $u_1, u_2 \in S$, we have $N(u_1) \subset N(u_2)$ or $N(u_2) \subset N(u_1)$. Let $\alpha$ be an ordering of $S$ by degree of each $u \in S$. Then the nested property holds with respect to $\alpha$.

 Second, we prove the backward direction. Assume that $G=(S \cup Q, E)$ satisfies the nested property with respect to $\alpha$ on $S$. Then $N(\alpha^{-1}(1))\subset \cdots \subset N(\alpha^{-1}(|S|))$. Using the algorithm in the proof of Proposition~\ref{prop:IMO} for IMO, we obtain an ordering $\beta$ on $Q$ so that the interval property holds with respect to $\beta$.
 \end{proof}

 \oneSide*
 \begin{proof}
 \label{pf:oneSide}
 Consider the special case that one side of the correct ordering is given to us, say the questions are given in hardest to easiest order $v_1 \geq \cdots \geq v_q$. Then we can find the minimum number of errors needed to satisfy the required conditions by correcting the edges incident to each student $u$ individually.

 We know by the interval property that every student $u$ must correctly answer either a set of consecutive questions starting from $v_1$ or no questions at all. For each $u \in S$,and for each $v_j$, simply compute the number of edge edits required so that the neighborhood of $u$ becomes $\{v_1,\ldots, v_j\}$. Select the question $v_u$ that minimizes the cost of enforcing $\{v_1, \ldots, v_j\}$ to be the neighborhood of $u$. Once the edges have been corrected, order the students by the containment relation of their neighborhoods.

 The algorithm correctly calculates the minimum edge edits since the interval property was satisfied at the minimum cost possible per student. The algorithm found the neighborhood of each student by trying at most $|Q| < n$ difficulty thresholds $v_j$, and the cost of calculation for each threshold takes $O(1)$, by using the value calculated from the previous thresholds tried. Summing over the $|S|< n$ students gives a total running time no more than $O(n^2)$.
 \end{proof}

 \subsection{$k$-near Problems}

 \subsubsection{Constrained $k$-near Addition}
 \label{sect:cka}

 \paragraph
 \noindent We use the same framework as Constrained $k$-near Editing to solve the Constrained $k$-near Addition.
 We change the definitions of the subproblem, the relevant sets, and the costs appropriately to adapt to the Addition problem.

 \constrainedAddition*
 \begin{proof}
 \label{pf:constrainedAddition}
 First, redefine $\bm{C(i,u_i,U_i,v_{j_i})}$ to be the smallest cost of adding edges incident to the weakest $i$ positions so that $u_i$ is in position $i$, $U_i$ is the set of students in the weakest $i-1$ positions, and $v_{j_i}$ is the hardest question correctly answered by the $i$ weakest students.

 The sets $\bm{P_{i,u_i}}$ and $\bm{F_{i,u_i}}$ will stay the same as before. We redefine $\bm{S_i}:=\Big\{(u_i,U_i,v_{j_i}) : u_i \in \big[ \max \{1,i-k\}, \min \{|S|,i+k\} \big], U_i \setminus \big[1,\max\{1,i-k-1\}\big] \in F_{i,u_i}, v_{j_i} \in Q \cup \{0\}, v_{j_i} \geq \max{N_G(\{u_i\}\cup U_i)} \Big\}$. Requiring that $v_{j_i}$ is at least as hard as $N_G(\{u_i\}\cup U_i)$ ensures that the final solution will satisfy the interval property with respect to the given question order. It was not needed in the Editing problem because wherever $v_{j_i}$ landed, the edges that reach questions harder than $v_{j_i}$ were deleted. The definition of $\bm{R_{i-1,u_i,U_i,v_{j_i}}}$ will stay the same as before, but using the new definition of $S_{i-1}$ from this section. Finally, the cost $\bm{c_{u_i,v_{j_i}}}$ will become the smallest number of edge additions incident to $u_i$ so that the neighborhood of $u_i$ becomes $\{1,\ldots,v_{j_i}\}$, i.e. $c_{u_i,v_{j_i}} := |\{1,\ldots,v_{j_i}\} \setminus N_G(u_i)|$.

 The recurrence relation from Constrained $k$-near Editing still applies here. However, the base cases become $C(1,u_1,U_1,v_{j_1})=|\{1,\ldots,v_{j_1}\} \setminus N_G(u_1)|$ if $v_{j_1} > 0$, and $C(1,u_1,U_1,v_{j_1})=0$ if $v_{j_1} = 0$.

 The run time is still dominated by the dynamic program since the time to construct $S_i$ becomes only $|Q|$ times larger (to enforce the additional constraint that $v_{j_i}$ is hard enough). Hence the total time to solve this problem remains $O(n^3 k^{2k+2})$.
 \end{proof}

 \subsection{Unconstrained $k$-near}

 First, we solve the Unconstrained $k$-near Addition problem in time $O(n^3 k^{2k+2})$. Second, we show that the Unconstrained $k$-near Editing problem is NP-hard.

 Assume that the students are given in $k$-near order $1,\ldots,|S|$. The questions are allowed to be ordered arbitrarily in the final solution.

 \subsubsection{Unconstrained $k$-near Addition}

 \unconstrainedAddition*
 \begin{proof}
 We introduce subproblems of the form $(i,u_i,U_i)$. Define $\bm{C(i,u_i,U_i)}$ to be the smallest number of edges incident to the weakest $i$ positions that must be added so that $u_i$ is in position $i$ and $U_i$ is the set of the $i-1$ weakest students.

 We use the same $\bm{P_{i,u_i}}$ and $\bm{F_{i,u_i}}$ as defined for Constrained $k$-near Editing to bound the search space for $U_i$ given that $u_i$ is in position $i$. Define $\bm{S_i}:=\Big\{ (u_i,U_i): u_i \in \big[\max\{1,i-k\}, \min\{|S|,i+k\} \big], U_i \setminus [1, \max\{1,i-k-1\} \in F_{i,u_i} \Big\}$.

 Next, define $\bm{R_{i-1,u_i,U_i}}:=\Big\{ (u_{i-1},U_{i-1}) \in S_{i-1}: U_i = \{u_{i-1}\} \cup U_{i-1} \Big\}$. The set $R_{i-1,u_i,U_i}$ ensures that the smaller subproblems have prefixes that are compatible with those assigned in the bigger subproblems they came from. Compatibility is illustrated in Figure~\ref{fig:unconstrainedDP}.

 \begin{figure}
 \centering
   \includegraphics[scale=.35]{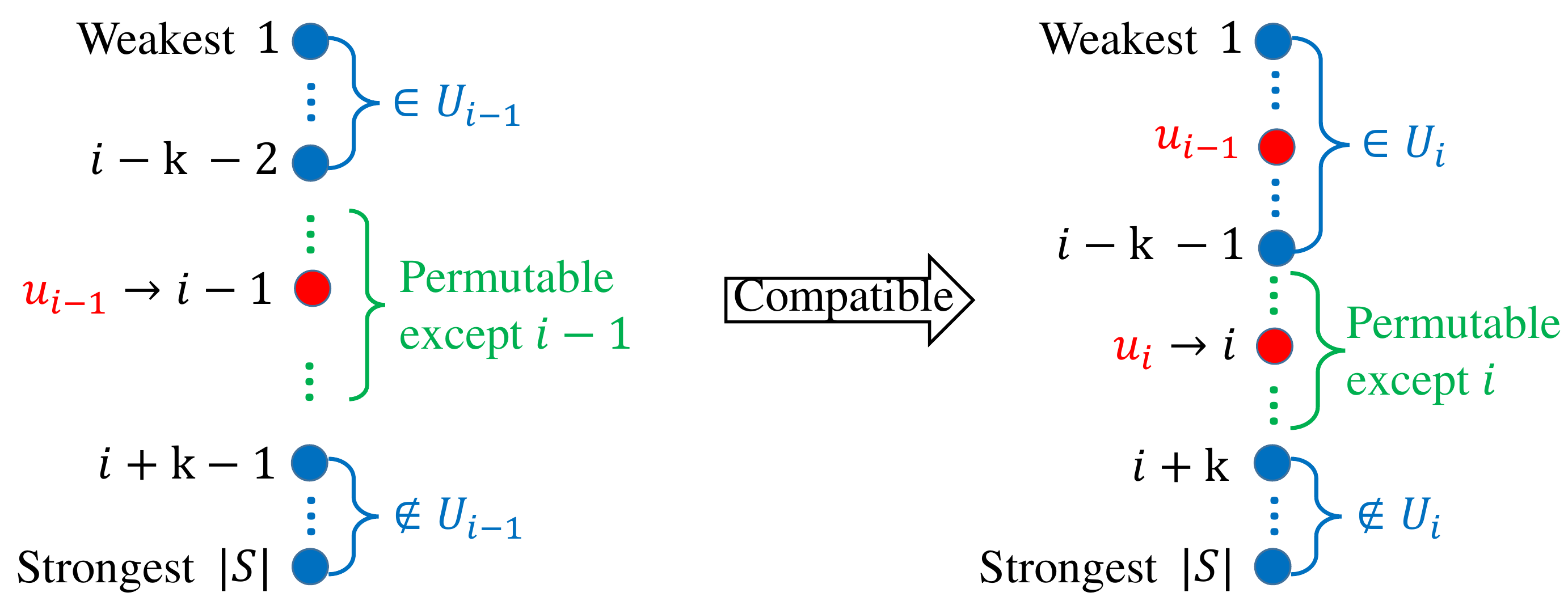}
   \caption{Subproblem $(i-1,u_{i-1},U_{i-1})$ is compatible with subproblem $(i,u_i,U_i)$ if and only if $U_i = \{u_{i-1}\} \cup U_{i-1}$. The cost of $(i,u_i,U_i)$ is sum of the minimum cost among feasible compatible subproblems of the form $(i-1,u_{i-1},U_{i-1})$ and the minimum number of additions incident to $u_i$ to make its neighborhood contain the existing neighbors of $U_i$.}
   \label{fig:unconstrainedDP}
 \end{figure}

 Lastly, define $\bm{c_{u_i,U_i}}$ to be the smallest number of edge additions incident to $u_i$ so that the neighborhood of $u_i$ will contain $N_G(U_i \cup \{u_i\})$, i.e. $c_{u_i,U_i} := |N_G(U_i \cup \{u_i\}) \setminus N_G(u_i)|$.

 Using the above definitions, we have the following recurrence.

 $$C(i,u_i,U_i)= \min_{(u_{i-1},U_{i-1}) \in R_{i-1,u_i,U_i}} \{C(i-1,u_{i-1},U_{i-1})\}+c_{u_i,U_i} $$

 The base cases are $C(1,u_1,U_1)=|N_G(U_1) \setminus N_G(u_1)|$ for all $(u_1,U_1) \in S_1$, since $u_1$ must add edges to the questions that the weaker students correctly answered.

 \sloppy The final solution to Unconstrained $k$-near Addition is $\min_{(u_{|S|},U_{|S|}) \in S_{|S|}} {C(|S|,u_{|S|},U_{|S|})}$.

 To bound the run time, note that generating $S_i$ takes $O(n \cdot k^k k^2)$ time. The dynamic program will dominate the run time again. In the dynamic program, each subproblem $(i,u_i,U_i)$ takes $O(|R_{i-1,u_i,U_i}|)$ time. So the total time is $O(\sum_{i \in S, (u_i,U_i) \in S_i} |R_{i-1,u_i,U_i}|) = O(|S| |S_i| |S_{i-1}|) = O(n(n k^k)^2)=O(n^3 k^{2k+2})$.
 \end{proof}

 \subsubsection{Unconstrained $k$-near Editing}

 \paragraph
 \noindent The Unconstrained $k$-near Editing problem is NP-hard even for $k=1$. We closely follow the proof of Drange et. al.~\cite{DDLS15} for the NP-hardness of Threshold Editing to show that Unconstrained $k$-near Editing is NP-hard. In Drange et. al.'s construction, they specified a partial order for which the cost of Threshold Editing can only worsen if the output ordering deviates from it. We crucially use this property to prove NP-hardness for Unconstrained $1$-near Editing.

 \NPhard*
 \begin{proof}
 \label{pf:NPhard}
 Let $G=(S,Q,E)$ be a bipartite graph with initial student ordering $\pi$. Consider the decision problem $\Pi$ of determining whether there is a 1-near unconstrained editing of at most $t$ edges for the instance $(G,\pi)$. We reduce from 3-SAT to $\Pi$. Let $\Phi$ be an instance for 3-SAT with clauses $C=\{c_1,\ldots,c_m\}$ and variables $V=\{v_1,\ldots,v_n\}$. We construct the corresponding instance $\Pi=(G_{\Phi},\pi_{\Phi}, t_{\Phi})$ for 1-near unconstrained editing as follows.
 First we order the variables in  an arbitrary order and use this order to define $\pi$.
 For each variable $v_i$, create six students $s^i_a, s^i_b, s^i_f, s^i_t, s^i_c, s^i_d$. Next, we define a partial ordering $P$ that the initial order $\pi_{\Phi}$ shall obey. Define $P$ to be the partial order satisfying $s^i_a > s^i_b > s^i_f, s^i_t > s^i_c > s^i_d$ for all $i \in [n]$ and $s^i_{\alpha} > s^j_{\beta}$ for all $i < j$, $\alpha, \beta \in \{a,b,c,d,f,t\}$. Define $\pi_{\Phi}$ to be the linear ordering satisfying all relations of $P$ for the variables in the initial arbitrary order, and additionally $s^i_f > s^i_t$. We remark that the proof works regardless of whether we set $s^i_f > s^i_t$ or $s^i_f < s^i_t$ in $\pi_{\Phi}$. We shall impose that optimal solutions satisfy all of the relations of $P$. To do so, for every $s > s'$, we add $t_{\Phi}+1$ new questions each with edges to $s$ and no edges to $s'$, and with edges to all $r>s$ in $\pi_{\Phi}$. Then whenever an editing solution switches the order of $s$ and $s'$, it must edit at least $t_{\Phi}+1$ edges. After adding the necessary questions to ensure feasible solutions must preserve the partial order $P$, we create a question $q_{c_l}$ for each clause $c_l$. If a variable $v_i$ appears positively in $c_l$, then add the edge $q_{c_l} s^i_t$. If $v_i$ appears negatively in $c_l$, then add the edge $q_{c_l} s^i_f$. If $v_i$ does not occur in $c_l$, then add the edge $q_{c_l} s^i_c$. For all variables $v_i$ and clauses $c_l$, add the edges $q_{c_l} s^i_b$ and $q_{c_l} s^i_d$. Finally, define $t_{\Phi} = |C| (3 |V| - 1)$. Refer to Figure~\ref{fig:NPhard} for an illustration of the construction.

 \begin{figure}
 \centering
   \includegraphics[scale=.35]{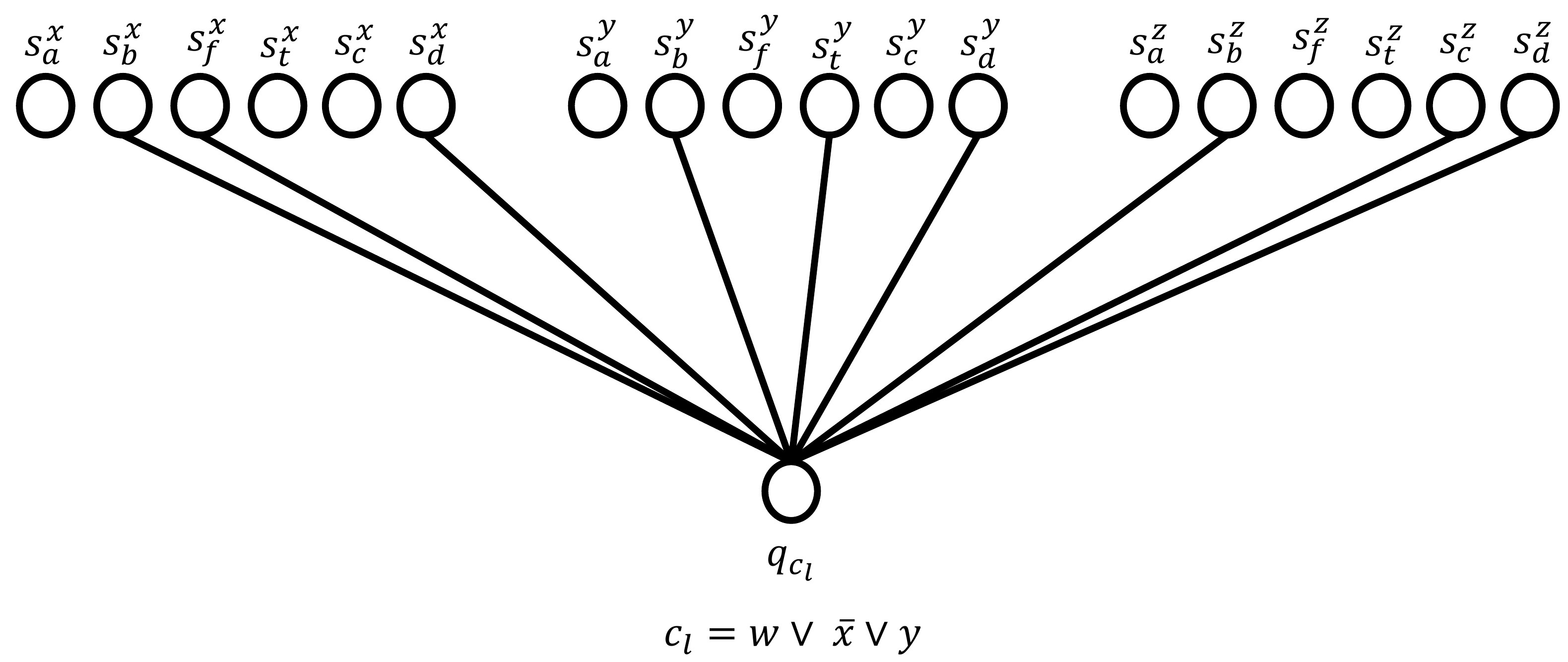}
   \caption{Each set of six vertices represent the students corresponding to a variable $x$, $y$, or $z$. The bottom vertex represents a question corresponding to the clause $c_l = w \vee \bar{x} \vee y$.}
   \label{fig:NPhard}
 \end{figure}

 Now, we show that there is a satisfying assignment if and only if there is a 1-near editing of at most $t_{\Phi}$ edges. First, we prove the forward direction. Assume there is a satisfying assignment $f:V \rightarrow \{T,F\}$. Let $c_l$ be a clause. One of the literals $v_i$ in $c_l$ is set to $T$ under the assignment $f$. If $v_i$ occurs positively, then edit the neighborhood of $q_{c_l}$ to be all students $s$ such that $s \geq s^i_t$ according to $P$ and impose $s^i_t > s^i_f$ in the solution. If $v_i$ occurs negatively in $q_{c_l}$, then edit the neighborhood of $q_{c_l}$ to be all students $s$ such that $s \geq s^i_f$ and keep the initial order that $s^i_f > s^i_t$. In both cases, the neighborhood of $q_{c_l}$ changed by 2 among the six students corresponding the variable $v_i$ and changed by 3 for the remaining groups of six students. So the number of edge edits incident to each (clause) question is $3|V|-1$. Note that the neighborhoods of the extra questions we added to impose $P$ are already nested because each time a new question was added, it received edges to all students who are stronger than a particular student according to $P$. So only the questions that came from clauses potentially need to edit their neighborhoods to achieve nesting. Hence, the total number of edge edits is $|C| (3|V|-1) = t_{\Phi}$.

 Second, we prove the backward direction. Assume there is an unconstrained 1-near editing of $|C|(3|V|-1)$ edges to obtain a chain graph. Let $c_l$ be a clause. For any variable $v_j$ not occurring in $c_l$, the original edges that $q_{c_l}$ has to the six students corresponding to $v_j$ are to $s^j_b, s^j_c, s^j_d$. If the cut-off point of the edited neighborhood of $q_{c_l}$ is among $s^j_a, s^j_b, s^j_f, s^j_t, s^j_c, s^j_d$, then the edges incident to $q_{c_l}$ must change by at least three among those six, which means that $q_{c_l}$ would have at least $3|V|$ edges incident to it. If the cut-off point of the edited neighborhood of $q_{c_l}$ is among the six students corresponding to a variable $v_i$ that occurs in $c_l$, then the edges incident to $q_{c_l}$ must change by at least two (by switching the order of $s^i_f$ and $s^i_t$ when needed) among those six students and at least three for the students corresponding to the remaining variables. Thus $q_{c_l}$ has at least $3|V|-1$ edges edits incident to it for every $c_l$. So the smallest number of edge edits possible is at least $|C|(3|V|-1)$. By the assumption, $G_{\Phi}$ has a feasible editing of at most $|C|(3|V|-1)$ edges. Then each $q_{c_l}$ must have exactly $3|V|-1$ edits incident to it. So the cut-off point for the edited neighborhood of each $q_{c_l}$ must occur among the six students corresponding to a variable $v_i$ occurring inside $c_l$. If the occurring variable $v_i$ is positive, then the cut-off point must have been at $s^i_t$ and required $s^i_t > s^i_f$ since all other cut-offs incur at least three edits. Similarly, if $v_i$ is negative, then the cut-off point must have been at $s^i_f$ and required $s^i_f > s^i_t$. All clauses must be consistent in their choice of the ordering between $s^i_f$ and $s^i_t$ for all $i \in [n]$ since the editing solution was feasible. Hence, we obtain a satisfying assignment by setting each variable $v_i$ true if and only if $s^i_t > s^i_f$.
 \end{proof}

 \subsection{Both $k$-near}
 We will solve the Both $k$-near Editing and Addition problems in time $O(n^3 k^{4k+4})$. We first show our solution for the Editing problem and then adapt it to the Addition problem.

 Assume that the students and questions are both given in $k$-near order with student labels $1,\ldots, |S|$, and question labels $1,\ldots, |Q|$.

 \subsubsection{Both $k$-near Editing}
 \label{sect:bke}

 \bothEditing*
 \begin{proof}
 \sloppy We consider subproblems of the form $(i,u_i,U_i,j_i,v_{j_i},V_{j_i})$. Define $\bm{C(i,u_i,U_i,j_i,v_{j_i},V_{j_i})}$ to be the smallest number of edges incident to the weakest $i$ students that must be edited so that student $u_i$ is in position $i$, $U_i$ is the set of the $i-1$ weakest students, $j_i$ is the position of the hardest question correctly answered by $U_i \cup \{u_i\}$, $v_{j_i}$ is the question in position $j_i$, and $V_{j_i}$ is the set of the $j_i-1$ easiest questions.

 \noindent \textbf{Feasible and Compatible Subproblems.}
 \sloppy Next, we define the search space for $(u_i,U_i,j_i,v_{j_i}, V_{j_i})$ given that $u_i$ is in position $i$. We use the same $\bm{P_{i,u_i}}$ and $\bm{F_{i,u_i}}$ defined in the proof for Constrained $k$-near Editing. Define $\bm{S_i}:=\Big\{ (u_i,U_i,j_i,v_{j_i}, V_{j_i}): u_i \in \big[ \max \{1,i-k\}, \min \{|S|,i+k\} \big], U_i \setminus \big[1,\max\{1,i-k-1\}\big] \in F_{i,u_i}, v_{j_i} \in \big[ \max\{1,j_i-k\}, \min\{|Q|,j_i+k\} \big], V_{j_i} \setminus \big[ 1, \max\{1,j_i-k-1\}\big] \in F_{j_i,v_{j_i}} \Big\}$. Here, we need to constrain both the student side and the question side to make sure that all elements are $k$-near as opposed to only enforcing the $k$-nearness on the students in Constrained $k$-near Editing.

 \sloppy To bound the search space for subproblems to be compatible with the bigger subproblems they came from, we define $\bm{R_{i-1,u_i,U_i,j_i,v_{j_i},V_{j_i}}} := \Big\{ (u_{i-1},U_{i-1},j_{i-1},v_{j_{i-1}}, V_{j_{i-1}}) \in S_{i-1}: U_i = U_{i-1} \cup \{u_{i-1}\}, j_i \geq j_{i-1}, V_{j_i} \cup \{v_{j_i}\} \supset V_{j_{i-1}} \cup \{v_{j_{i-1}}\}, j_i > j_{i-1} \Rightarrow V_{j_i} \supset V_{j_{i-1}} \cup \{v_{j_{i-1}}\} \Big\}$. The constraints in the set $R_{i-1,u_i,U_i,j_i,v_{j_i},V_{j_i}}$ ensure that the prefixes of position $i$ and position $j_i$ in the smaller subproblem will be compatible with the bigger subproblem that it came from. Furthermore, $j_i \geq j_{i-1}$ ensures that stronger students correctly answer all questions that weaker students correctly answered. We demonstrate compatibility in Figure~\ref{fig:bothNearDP}.

 \begin{figure}
   \includegraphics[scale=.45]{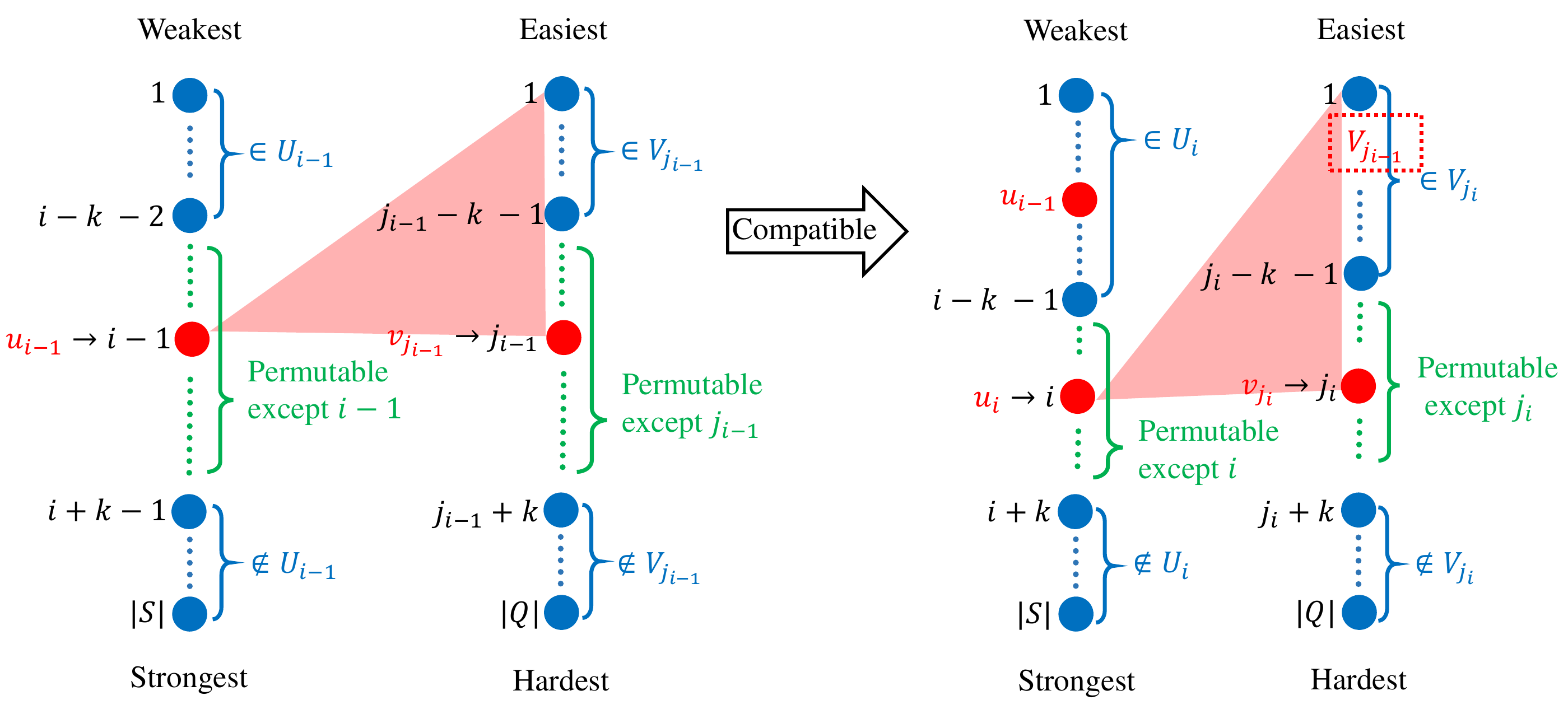}
   \caption{Subproblem $(i-1, u_{i-1}, U_{i-1}, j_{i-1}, v_{j_{i-1}}, V_{j_{i-1}})$ is compatible with subproblem $(i, u_i, U_i, j_i, v_{j_i}, V_{j_i})$ if and only if $U_i = \{u_i\} \cup U_{i-1}$, $j_{i-1}$ represents a position no harder than $j_i$, $V_{j_i} \cup \{v_{j_i}\}$ contains $V_{j_{i-1}} \cup \{v_{j_{i-1}}\}$, and $j_{i-1}$ strictly easier than $j_i$ implies that $V_{j_i}$ contains $V_{j_{i-1}} \cup \{v_{j_{i-1}}\}$. The cost of $(i, u_i, U_i, j_i, v_{j_i}, V_{j_i})$ is the sum of the minimum cost among feasible compatible states of the form $(i-1, u_{i-1}, U_{i-1}, j_{i-1}, v_{j_{i-1}}, V_{j_{i-1}})$ and the minimum number of edits incident to $u_i$ that makes its neighborhood $V_{j_i} \cup \{v_{j_i}\}$.}
   \label{fig:bothNearDP}
 \end{figure}

 \noindent \textbf{The Dynamic Program.}
 Finally, define $\bm{c_{u_i,v_{j_i},V_{j_i}}}$ to be the smallest number of edge edits incident to $u_i$ so that the neighborhood of $u_i$ becomes exactly $V_{j_i} \cup \{v_{j_i}\}$, i.e. $c_{u_i,v_{j_i},V_{j_i}} := |N_G(u_i) \triangle V_{j_i} \cup \{v_{j_i}\}|$.

 Using the above definitions, we obtain the following recurrence.

 {\setlength{\mathindent}{0cm}
 \begin{align*}
 & C(i,u_i,U_i,j_i,v_{j_i},V_{j_i})=\\
 & \min_{(u_{i-1},U_{i-1},j_{i-1},v_{j_{i-1}}, V_{j_{i-1}}) \in R_{i-1,u_i,U_i,j_i,v_{j_i},V_{j_i}}}  \{C(i-1,u_{i-1},U_{i-1},j_{i-1},v_{j_{i-1}}, V_{j_{i-1}}) \}\\
 & \quad + c_{u_i,v_{j_i},V_{j_i}}
 \end{align*}
 }

 The base cases are $C(1,u_1,U_1,j_1,v_{j_1},V_{j_1})=|N_G(u_1) \triangle \{v_{j_1}\} \cup V_{j_1}|$ for all $(u_1,U_1,j_1,v_{j_1},V_{j_1}) \in S_1$.

 \sloppy The final solution is $\min_{(u_{|S|},U_{|S|},j_{|S|},v_{j_{|S|}},V_{j_{|S|}}) \in S_{|S|}} {C(|S|,u_{|S|},U_{|S|},j_{|S|},v_{j_{|S|}},V_{j_{|S|}}})$.

 \noindent \textbf{Running Time.}
 \sloppy First, observe that $|S_i| = O(k^2 k^{2k} |Q|)$, since there are $O(k)$ choices for $u_i$ and $v_i$, $O(k^k)$ choices for $U_i$ and $V_{j_i}$, and $|Q|$ choices for $j_i$. To build $S_i$, we need to build $F_{i,u_i}$ and $F_{j_i, v_{j_i}}$. In Section~\ref{sect:exacts}, we saw that each of the $F_{i,u_i}$ takes $O(k^2 k^k)$ time to build. Then building the set $S_i$ is upper bounded by $O(k \cdot k^k k^2 \cdot |Q| \cdot k \cdot k^k k^2)$ per $i$, where we are over-counting the time to generate all possible $U_i$ and $V_{j_i}$ by the time it takes to build $F_{i,u_i}$ and $F_{j_i,v_{j_i}}$. Building the set $R_{i-1,u_i,U_i,j_i,v_{j_i},V_{j_i}}$ while building $S_i$ will take $O(|S|+|Q|)$ to check the conditions that restrict $S_{i-1}$ to $R_{i-1,u_i,U_i,j_i,v_{j_i},V_{j_i}}$. Due to the size of $S_i$, the construction of sets will still be dominated by the time to solve the dynamic program. Specifically, each subproblem $(i,u_i,U_i,j_i,v_{j_i},V_{j_i})$ takes $O(|R_{i-1,u_i,U_i,j_i,v_{j_i},V_{j_i}}|)$ time. So the total time is $O(\sum_{i \in S, (u_i,U_i,j_i,v_{j_i},V_{j_i}) \in S_i} |R_{i-1,u_i,U_i,j_i,v_{j_i},V_{j_i}}|)=O(|S| |S_i| |S_{i-1}|) = O(n(k^2 \cdot k^{2k} n)^2)= O(n^3 k^{4k+4})$.
 \end{proof}

 \subsubsection{Both $k$-near Addition}

 \paragraph
 \noindent To solve the Addition version, we apply the method from the solution for Both $k$-near Editing.

 \bothAddition*
 \begin{proof}
 We redefine $\bm{C(i,u_i,U_i,j_i,v_{j_i},V_{j_i})}$ to be the smallest number of edges incident to the weakest $i$ students that must be added so that student $u_i$ is in position $i$, $U_i$ is the set of the $i-1$ weakest students, $j_i$ is the position of the hardest question correctly answered by $U_i \cup \{u_i\}$, $v_{j_i}$ is the question in position $j_i$, and $V_{j_i}$ is the set of the $j_i-1$ easiest questions.

 We keep $\bm{P_{i,u_i}}$ and $\bm{F_{i,u_i}}$ the same as in the proof for Constrained $k$-near Editing. Redefine $\bm{S_i}:=\Big\{ (u_i,U_i,j_i,v_{j_i}, V_{j_i}): u_i \in \big[ \max \{1,i-k\}, \min \{|S|,i+k\} \big], U_i \setminus \big[1,\max\{1,i-k-1\}\big] \in F_{i,u_i}, v_{j_i} \in \big[ \max\{1,j_i-k\}, \min\{|Q|,j_i+k\} \big], V_{j_i} \setminus \big[ 1, \max\{1,j_i-k-1\}\big] \in F_{j_i,v_{j_i}}, V_{j_i}\cup \{v_{j_i}\} \supset N_G(\{u_i\} \cup U_i) \Big\}$. The addition constraint $V_{j_i}\cup \{v_{j_i}\} \supset N_G(\{u_i\} \cup U_i)$ is added here to ensure that the interval property induced by the current student ordering is satisfied every step. It was not needed in section~\ref{sect:bke} because existing edges to questions outside $V_{j_i} \cup \{v_{j_i}\}$ could be deleted. The definition of $\bm{R_{i-1,u_i,U_i,j_i,v_{j_i},V_{j_i}}}$ remains the same as section~\ref{sect:bke}, but using the newly defined $S_{i-1}$. Lastly, redefine $\bm{c_{u_i,v_{j_i},V_{j_i}}}$ to be the smallest number of edge additions incident to $u_i$ so that the neighborhood of $u_i$ becomes exactly $V_{j_i} \cup \{v_{j_i}\}$, i.e. $c_{u_i,v_{j_i},V_{j_i}} := |V_{j_i} \cup \{v_{j_i}\} \setminus N_G(u_i)|$.

 The general recurrence relation of Section~\ref{sect:bke} stays the same. The base cases change to $C(1,u_1,U_1,j_1,v_{j_1},V_{j_1})=|\{v_{j_1}\}\cup V_{j_1} \setminus N_G(u_1)|$, with the convention that $j_1=0$ means $V_{j_1}=\emptyset$ and $v_{j_1}$ is omitted from the count $|\{v_{j_1}\}\cup V_{j_1}|$.

 Although the time to construct $S_i$ is larger by a factor of $|Q|$, the total run time is dominated by the dynamic program, which takes $O(n^3 k^{4k+4})$.
 \end{proof}

\end{document}